\newtheorem{theorem}{Theorem}
\def\eqref#1{equation~\ref{#1}}
\def\1{\bm{1}}
\def\vc{{\bm{c}}}
\def\vh{{\bm{h}}}
\def\vx{{\bm{x}}}
\def\vz{{\bm{z}}}
\def\mO{{\bm{O}}}
\def\mT{{\bm{T}}}
\DeclareMathAlphabet{\mathsfit}{\encodingdefault}{\sfdefault}{m}{sl}
\SetMathAlphabet{\mathsfit}{bold}{\encodingdefault}{\sfdefault}{bx}{n}
\newcommand{\degree}{^\circ}
\title{Generalized Protein Pocket Generation with Prior-Informed Flow Matching}
\author{%
	Zaixi Zhang$^{1,2}$, Marinka Zitnik$^{3*}$, Qi Liu$^{1,2}$\thanks{Marinka Zitnik and Qi Liu are the corresponding authors.},\\
	1: School of Computer Science and Technology, University of Science and Technology of China\\2:State Key Laboratory of Cognitive Intelligence, Hefei, Anhui, China\\3:Harvard University\\
	zaixi@mail.ustc.edu.cn, marinka@hms.harvard.edu, qiliuql@ustc.edu.cn
}
\begin{document}

\maketitle

\begin{abstract}
Designing ligand-binding proteins, such as enzymes and biosensors, is essential in bioengineering and protein biology. One critical step in this process involves designing protein pockets, the protein interface binding with the ligand. 
Current approaches to pocket generation often suffer from time-intensive physical computations or template-based methods, as well as compromised generation quality due to the overlooking of domain knowledge. 
To tackle these challenges, we propose PocketFlow, a generative model that incorporates protein-ligand interaction priors based on flow matching. During training, PocketFlow learns to model key types of protein-ligand interactions, such as hydrogen bonds. In the sampling, PocketFlow leverages multi-granularity guidance (overall binding affinity and interaction geometry constraints) to facilitate generating high-affinity and valid pockets.
Extensive experiments show that PocketFlow outperforms baselines on multiple benchmarks, e.g., achieving an average improvement of 1.29 in Vina Score and 0.05 in scRMSD. Moreover, modeling interactions make PocketFlow a generalized generative model across multiple ligand modalities, including small molecules, peptides, and RNA. 
\end{abstract}

\section{Introduction}
Proteins are the fundamental building blocks of living organisms, often interacting with ligands (e.g., small molecules, nucleic acids, and peptides) to execute their functions.
Recently, computational methods have played critical roles in designing functional proteins binding with ligands with broad applications in bio-engineering and therapeutics \cite{tinberg2013computational, lei2021deep, kroll2023general, lee2023small, qiao2024state, baek2024accurate, lu2024novo}.
For example, Polizzi et al., \cite{polizzi2020defined} leverage template-matching methods to design de novo proteins binding with the drug apixaban \cite{byon2019apixaban}; Yeh et al., \cite{yeh2023novo} use deep learning methods to generate efficient light-emitting enzyme luciferases with selective substrate catalysis capabilities. To design such ligand-binding proteins, an essential step is to design protein pockets, the protein interface interacting with binding ligands \cite{rothlisberger2008kemp, doi:10.1126/science.1152692, bick2017computational, glasgow2019computational}. 
However, the complexity of ligand-protein interactions, the variability of protein sidechains, and sequence-structure relationships pose great challenges for pocket design \cite{dou2017sampling, lee2023small, krishna2023generalized}.

Traditional methods for pocket design mainly focus on physics modeling or template-matching \cite{bick2017computational, glasgow2019computational, polizzi2020defined, chen2022depact, noske2023pocketoptimizer}. 
However, the involved physical energy calculation or substructure enumeration could be quite time-consuming. 
Recent advancements in pocket design have benefited a lot from deep learning-based approaches \cite{stark2023harmonic, zhang2023full, yeh2023novo, kong2023end, lee2023small, krishna2023generalized}. 
However, these innovative approaches often overlook essential domain knowledge, such as the protein-ligand interactions and the geometric constraints governing them.
Though they can efficiently generate many candidates, further screening/optimization is required to get valid and high-affinity pockets. Moreover, most methods are restricted to small molecule ligands, omitting other important ligand types such as nucleic acids \cite{baek2024accurate} and peptides \cite{lin2024ppflow}.

To tackle the aforementioned challenges, we propose PocketFlow, a protein-ligand interaction prior-informed flow matching model for protein pocket generation. 
Firstly, we define conditional flows for diverse data modalities in the protein-ligand complex including backbone frames, sidechain torsions, and residue/interaction types. We choose flow matching as the generative framework because of its efficiency and flexibility \cite{chen2023riemannian, bose2023se, lipman2022flow, lipman2022flow}. 
Furthermore, PocketFlow explicitly learns the dominant protein-ligand interaction types including hydrogen bonds \cite{hubbard2010hydrogen}, salt bridges \cite{donald2011salt}, hydrophobic interactions \cite{meyer2006recent}, and $\pi-\pi$ stacking \cite{hunter1990nature}, which are crucial for strong binding stability and affinity of protein-ligand pairs \cite{Abramson2024Accurate}. In the sampling process, binding affinity and interaction geometry guidance are adopted to encourage generating pockets with high affinity and validity. Specifically, we leverage a lightweight binding affinity predictor to predict the affinity of the generated complex and 
apply distance and angle constraints to promote desirable protein-ligand interactions. 
To circumvent the non-differentiability issues associated with residue type sampling, we employ a novel sidechain ensemble method for interaction geometry calculations.
Extensive experiments show that PocketFlow provides a generalized framework for high-quality protein pocket generation across various ligand modalities (small molecules, RNA, peptides, etc.,). Our main contributions are summarized as:
\begin{itemize}
    \item \textbf{Generalized tasks:} 
    Our study broadens the scope of protein pocket generation tasks to include various ligand modalities such as small molecules, nucleic acids, and peptides.
    \item \textbf{Novel method:} 
    PocketFlow combines the recent progress of flow-matching-based generative models and physical/chemical interaction priors (affinity guidance and interaction geometry guidance) to generate protein pockets with enhanced affinity and structural validity.
    \item \textbf{Strong performance:} PocketFlow outperforms existing methods on various benchmarks of pocket generation, producing an average improvement of 1.29 in Vina score and 0.05 in scRMSD.
    Further interaction analysis highlights the model's ability to foster beneficial protein-ligand interactions, e.g., an average of 4.12 hydrogen bonds, while markedly reducing steric clashes (an average of 1.21 in generated pockets v.s. 4.59 in the test set). 
\end{itemize}

\section{Related Works}
\subsection{Generative Models for Protein Generation}
Recent advancements in deep generative models have significantly advanced the field of \emph{de novo} protein structure generation, enabling researchers to create proteins with specific desired properties \cite{watson2023novo, ingraham2023illuminating, yim2023fast, yim2023se, bose2023se, zhang2023systematic, zhang2024flex, zhang2023subpocket}. For example,  RFDiffusion \cite{watson2023novo} employs denoising diffusion probabilistic models \cite{ho2020denoising} in conjunction with RoseTTAFold \cite{baek2021accurate} for \emph{de novo} protein structure generation. It achieved notable success by generating proteins validated in wet lab experiments. 
Chroma \cite{ingraham2023illuminating} leverages a similar diffusion process with efficient neural architecture for molecular systems that enables long-range reasoning with sub-quadratic scaling. It also demonstrates strong capabilities to satisfy constraints including symmetries, substructure, shape, semantics, and simple natural-language prompts.
Recently, models leveraging flow matching frameworks have shown promising results on protein generation \cite{yim2023se, bose2023se, yim2024improved, jing2024alphafold, lin2024ppflow}. For example, FoldFlow \cite{bose2023se} proposed a series of flow-matching-based generative models for protein backbones with improved training stability and efficiency than diffusion-based models. FrameFlow \cite{yim2023fast, yim2024improved} further shows sampling efficiency and achieves success on motif-scaffolding tasks with flow matching. However, these protein generation methods are not directly applicable to protein pocket generation that requires protein-ligand interaction modeling. 

\subsection{Protein Pocket Generation}
Protein pockets are the protein interface where the ligand binds to the protein and pocket design is a critical task for bioengineering \cite{rothlisberger2008kemp, doi:10.1126/science.1152692, bick2017computational, glasgow2019computational}.
Traditional methods for pocket design focus on physics modeling or template-matching \cite{bick2017computational, glasgow2019computational, polizzi2020defined, chen2022depact, noske2023pocketoptimizer, zhang2024pocketgen}. For example, PocketOptimizer \cite{noske2023pocketoptimizer} predicts mutations in protein pockets to increase binding affinity based on physical energy calculation, which may bring a large time burden. The recent progress in protein pocket design has been facilitated by deep generative models \cite{stark2023harmonic, zhang2023full, yeh2023novo, zhang2024pocketgen, krishna2023generalized}. For instance, FAIR \cite{zhang2023full} co-designs pocket structures and sequences using a two-stage coarse-to-fine refinement approach.
RFDiffusion All-Atom \cite{krishna2023generalized} extends RFDiffusion for joint modeling of protein and ligand structure to generate ligand-binding protein and further leverages ProteinMPNN\cite{dauparas2022robust}/LigandMPNN\cite{dauparas2023atomic} for sequence design. However, deep-learning methods lacking physical/chemical prior guidance may be less accurate and generalizable. In PocketFlow, we aim to design prior-guided pocket generative models. 




\section{Preliminaries}
\subsection{Notations and Problem Formulation}
\begin{figure*}[h]
    \centering
    \includegraphics[width=0.99\linewidth]{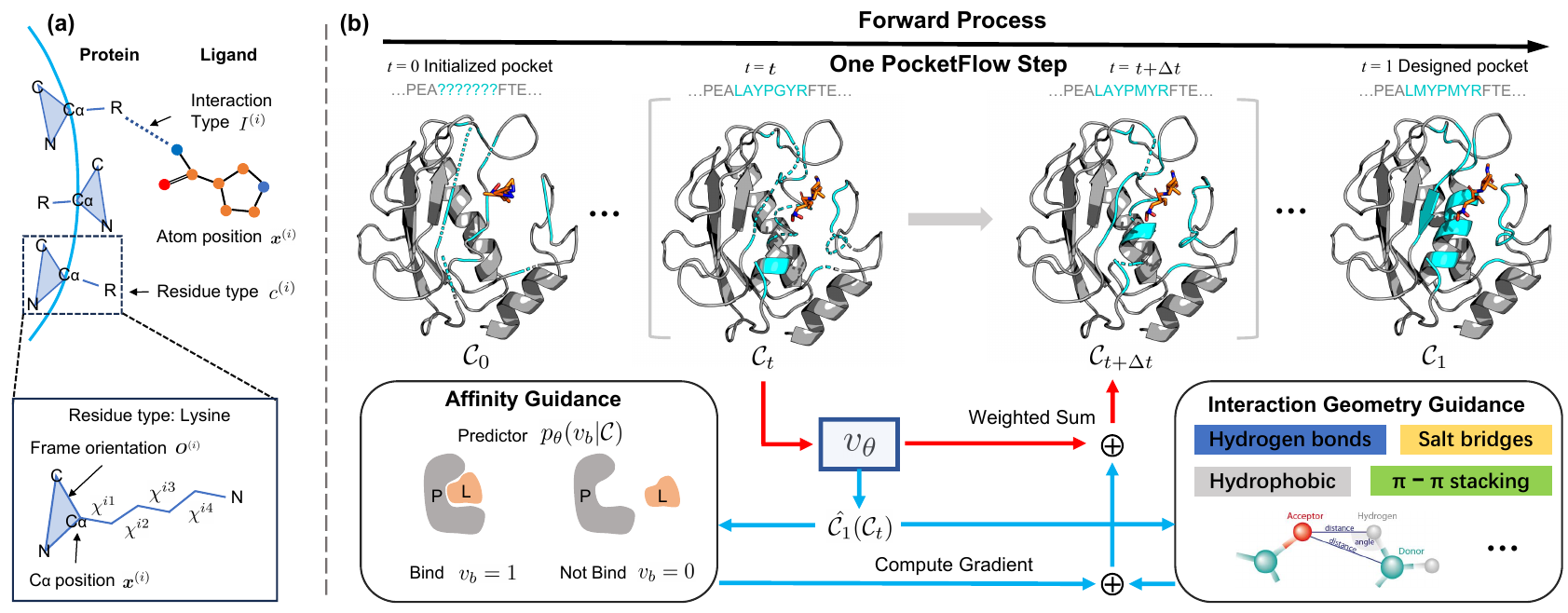}
    \caption{(a) Parameterization of protein-ligand complex. (b) Illustration of PocketFlow forward process. The affinity and interaction geometry guidance are proposed to improve affinity and structural validity. The red/blue lines denote the unconditional/guidance paths respectively.
    }
    \label{structure_illustration}
\end{figure*}
\label{problem formulation}
\textbf{Notations.} 
As shown in Figure~\ref{structure_illustration}(a), we model protein-ligand complex as $\mathcal{C} = \{\mathcal{P}, \mathcal{G}\}$ consisting of protein $\mathcal{P}$ and ligand $\mathcal{G}$ (small molecule as an example). Protein $\mathcal{P}$ is composed of a sequence of residues (amino acids) with residue types denoted $\vc^{(i)} \in \mathbb{R}^{20}$.
Consistent with \cite{zhang2023full, zaixi2024pocketgen}, the protein pocket $\mathcal{R} \subset \mathcal{P}$ is defined as the subset of residues closest to the ligand atoms under a threshold $\delta$ (e.g., 3.5 \AA).
In a residue, the backbone structure (consisting of $C_\alpha$, $N$, $C$, $O$) is parameterized with $C_\alpha$ position $\vx^{(i)} \in \mathbb{R}^3$ and a frame orientation matrix $\bm{O}^{(i)} \in SO(3)$ following \cite{jumper2021highly, yim2023fast}. The sidechain is parameterized with maximal 4 torsion angles $\bm{\chi}^{(i)} =\{\chi^{i1}, \chi^{i2}, \chi^{i3}, \chi^{i4}\}\in [0,2\pi)^4$. Given these key parameters, the full atom protein structure can be derived with the ideal frame coordinates and the sidechain bond length/angles \cite{jumper2021highly}. 
The protein-ligand interaction type for each residue is marked as $I^{(i)} \in \mathbb{R}^5$ (Hydrogen bond, Salt bridge, Hydrophobic, $\pi$-$\pi$ stacking, no interaction).
A pocket with $N_r$ residues can be compactly represented as $\mathcal{R}=\{\vc^{(i)}, \vx^{(i)}, \bm{O}^{(i)}, \bm{\chi}^{(i)}, I^{(i)}\}_{i=1}^{N_r}$. As for the ligand, we use a generalized atom-level representation that accommodates various modalities including small molecules, peptides, and RNA. The atom types and bonding information between atoms are given and PocketFlow predicts the $N_l$ ligand atom coordinates (also denoted as $\vx^{(i)}$ for conciseness).

\textbf{Problem Formulation.} PocketFlow co-designs residue types and 3D structures of the protein pocket conditioned on the ligand (could be small molecules, nucleic acids, peptides, etc.) and protein scaffold (the other parts of protein besides the pocket region, i.e., $\mathcal{P}\setminus\mathcal{R}$). The ligand structure $\mathcal{G}$ is also predicted. Formally, PocketFlow aims to learn a conditional generative model \( p_\theta(\mathcal{R}, \mathcal{G}|\mathcal{P}\setminus\mathcal{R}) \).

\subsection{Preliminaries on Flow Matching}
\label{flow matching}
Flow matching (FM) \cite{lipman2022flow} is a simulation-free method
for learning continuous normalizing flows (CNFs) \cite{chen2018neural} that
generates data by integrating an ordinary differential equation (ODE) over a learned vector field.
Here, we first give an overview of Riemannian flow matching \cite{chen2023riemannian}.
On a manifold $\mathcal{M}$, the CNF $\psi_t(\cdot) \colon \mathcal{M} \rightarrow \mathcal{M}$ is defined by integrating along a time-dependent vector field $u_t(x) \in \mathcal{T}_x\mathcal{M}$ where $\mathcal{T}_x\mathcal{M}$ denotes the tangent space of the manifold at $x \in \mathcal{M}$:
$
\frac{d}{dt}\psi_t(x) = u_t(\psi_t(x)), \psi_0(x) = x, t\in[0,1].$
The flow transforms a simple distribution $p_0$ towards the data distribution $p_1$.
In FM, the target is to learn a neural network $v_\theta(x, t)$ that approximates $u_t(x)$, while the vanilla regression loss: $\mathcal{L}_{FM}(\theta) = \mathbb{E}_{t \sim \mathcal{U}[0,1], p_t(x)} \lVert v_\theta(x, t) - u_t(x) \rVert^2_g $ is hard to compute in practice. 
Here, $\mathcal{U}[0,1]$ is the uniform distribution between 0 and 1, and $\|\cdot\|^2_g$ is the norm induced by the Riemannian metric $g$. Instead, it is tractable to define conditional vector field $u_t(x | x_1)$ and obtain the conditional FM objective: $\mathcal{L}_{CFM}(\theta) = \mathbb{E}_{t \sim \mathcal{U}[0,1] ,p_1(x_1),p_t(x|x_1)}  \lVert v_\theta(x, t) - u_t(x | x_1) \rVert^2_g $. It has been proved that $\nabla_{\theta}\mathcal{L}_{FM}(\theta) = \nabla_{\theta}\mathcal{L}_{CFM}(\theta)$ \cite{lipman2022flow, chen2023riemannian}. In the inference, ODE solvers are applied to solve the ODE, e.g., $x_1 = \text{ODESolve}(x_0, v_{\theta}, 0, 1)$ where $x_0$ is the initialized data and $x_1$ is the generated data. 

\section{PocketFlow}
PocketFlow is an interaction prior-informed flow-matching model for pocket design.
In this section, we first define PocketFlow for different components in the protein-ligand complex (backbone in Sec. \ref{se3}, sidechain in Sec. \ref{torus}, and residue/interaction types in Sec.\ref{residue types}). Then we show the prior-informed training and sampling in Sec. \ref{training} and \ref{sampling}.
\subsection{PocketFlow on SE(3)}
\label{se3}
As introduced in Sec. \ref{problem formulation}, each residue frame can be parameterized by a rigid transformation \( T = (\vx^{(i)}, \mO^{(i)})\) within SE(3) space. The backbone with \( N_r \) residues can thus be described by a set of transformations \( [T^{(1)}, \ldots, T^{(N_r)}] \) belonging to SE(3)\( ^{N_r} \) and constitutes a product space. The following deduction focuses on a single frame but can be generalized to the whole protein backbone. The $C_\alpha$ coordinates $\vx^{(i)}$ are initialized with linear interpolation and extrapolation based on the coordinates of neighboring scaffold residues following \cite{zhang2023full}. The prior distribution of $\mO^{(i)}$ is chosen as the uniform distribution on $\text{SO}(3)$. Following previous works \cite{chen2023riemannian, yim2023fast}, the conditional flow for $\vx^{(i)}$ and $\mO^{(i)}$ are defined as $\bm{x}_t^{(i)} = (1 - t)\bm{x}_0^{(i)} + t\bm{x}_1^{(i)}$ and $\bm{O}_t^{(i)} = \exp_{\bm{O}_0^{(i)}}(t\log_{\bm{O}_0^{(i)}}(\bm{O}_1^{(i)}))$ respectively, which are geodesic paths in $\mathbb{R}^3$ and $\text{SO}(3)$. 
The exponential map $\exp_{\mO_0}$ can be computed using Rodrigues’ formula and the logarithmic map $\log_{\mO_0}$ is similarly easy to compute with its Lie algebra $\mathfrak{so}(3)$ \cite{yim2023fast}.
The loss function of PocketFlow on SE(3) is the summation of the two losses below:
\begin{align}
\label{trans loss}
    &\mathcal{L}_{coord}(\theta) = \mathbb{E}_{t, p_1(\vx_1), p_0(\vx_0), p_t(\vx_t|\vx_0,\vx_1)}\frac{1}{N_r+N_l}\sum_{i=1}^{N_r+N_l}\left\lVert v_\theta^{(i)}(\vx_t^{(i)}, t) - \vx_1^{(i)} + \vx_0^{(i)}\right\rVert^2_2,\\
    &\mathcal{L}_{ori}(\theta) = \mathbb{E}_{t, p_1(\mO_1), p_0(\mO_0), p_t(\mO_t|\mO_0,\mO_1)}\frac{1}{N_r}\sum_{i=1}^{N_r} \left\lVert v_\theta^{(i)}(\mO_t^{(i)}, t) - \frac{\log_{\mO_t^{(i)}}(\mO_1^{(i)})}{1-t}\right\rVert_{\text{SO}(3)}^2,
\label{rot loss}
\end{align}
where we additionally consider $N_l$ ligand atom coordinates in $\mathcal{L}_{coord}(\theta)$, for which we use Gaussian distribution at the center of ligand mass as the prior distribution. 

\subsection{PocketFlow on Torus}
\label{torus}
As described in Sec. \ref{problem formulation}, the sidechain conformation of each residue can be represented as maximally four torsion angles $\bm{\chi}^{(i)} =\{\chi^{i1}, \chi^{i2}, \chi^{i3}, \chi^{i4}\}\in [0,2\pi)^4$. In a pocket with $N_r$ residues, the sidechain torsion angles form a hypertorus $\mathbb{T}^{4N_r}$, which is the quotient space $\mathbb{R}^{4N_r} / 2\pi\mathbb{Z}^{4N_r}$ with the equivalence relation: $\bm{\chi} = (\chi^{1}, \ldots, \chi^{4N_r}) \sim (\chi^{1} + 2\pi, \ldots, \chi^{4N_r}) \sim (\chi^{1}, \ldots, \chi^{4N_r} + 2\pi)$ \cite{jing2022torsional, zhang2023diffpack}.
Following \cite{jing2021equivariant}, the prior distribution is chosen as a uniform distribution over $\mathbb{T}^{4N_r}$. We regard the
torsion angles as mutually independent and use interpolation paths as: $\bm{\chi}_t = (1-t)\bm{\chi}_0 + t(\bm{\chi}_1'-\bm{\chi}_0)$
where $\bm{\chi}_1' = (\bm{\chi}_1 - \bm{\chi}_0 + \pi) \mod (2\pi) - \pi + \bm{\chi}_0$. The loss for the torsion angles is defined as:
\begin{equation}
    \mathcal{L}_{tor}(\theta) = \mathbb{E}_{t,p_{1}(\bm{\chi}_1),p_{0}(\bm{\chi}_0),p_{t}(\bm{\chi}_t|\bm{\chi}_0,\bm{\chi}_1)} \frac{1}{N_r}\sum_{i=1}^{N_r} \left\lVert v_\theta^{(i)}(\bm{\chi}_t^{(i)}, t) - \bm{\chi}_1'^{(i)} + \bm{\chi}_0^{(i)} \right\rVert_2^2 .
    \label{torsion loss}
\end{equation}

\subsection{PocketFlow on Residue Types and Interaction Types}
\label{residue types}
Each residue is assigned a probability vector with 20 dimensions: $\vc^{(i)} \in \mathbb{R}^{20}$. The prior distribution is set as the uniform distribution and the conditional flow is defined as the Euclidean interpolation between $\vc_0$ and $\vc_1$ (one hot vector indicating residue type). 
$\vc_t$ is a probability vector because its summation over all types equals 1. We leverage the cross-entropy loss $\text{CE}(\cdot, \cdot)$ following \cite{lin2024ppflow, stark2023harmonic, campbell2024generative}:
\begin{equation}
    \mathcal{L}_{res} = \mathbb{E}_{t \sim \mathcal{U}(0,1), p_1(\vc_1), p_0(\vc_0), p_t(\vc|\vc_0,\vc_1)} \sum_{i=1}^{N_r} \text{CE} \left( \vc_t^{(i)} + (1 - t)v_\theta^{(i)}(\vc_t^{(i)}, t), \vc_1^{(i)} \right),
    \label{res loss}
\end{equation}
which measures the difference between the true probability and the inferred one $\hat{\vc}_1^{(i)} = \vc_t^{(i)} + (1 - t)v_\theta^{(i)}(\vc_t^{(i)}, t)$. We also note the recent progress of the sequential flow matching methods \cite{stark2024dirichlet, campbell2024generative}, which can be seamlessly integrated into PocketFlow and are left for future works.

It has been shown that modeling \textbf{Protein-ligand interactions} explicitly in biomolecular generative models can effectively enhance the generalizability \cite{zhang2023learning, zhung20243d}.  
We used the protein–ligand interaction profiler (PLIP) \cite{salentin2015plip} to detect and annotate the protein-ligand interactions for each residue by analyzing their binding structure. Following \cite{zhung20243d}, 4 dominant interactions are considered including salt bridges, $\pi$–$\pi$ stacking, hydrogen bonds, and hydrophobic interactions. 
For simplicity, if a residue has more than one interaction, we take the one with the highest rank, which considers both the contribution to the binding affinity and the frequencies (see Appendix. \ref{sup: interaction}).
Similar to residue types, interactions are modeled as category data: $I = \{I^{(i)}\}_{i=1}^{N_r}$. Besides the 4 interaction types, we also consider an unknown/none type. Similar to Equ. \ref{res loss}, we have the interaction loss:
\begin{equation}
    \mathcal{L}_{inter} = \mathbb{E}_{t \sim \mathcal{U}(0,1), p_1(I_1), p_0(I_0), p_t(I|I_0,I_1)} \sum_{i=1}^{N_r} \text{CE} \left( I_t^{(i)} + (1 - t)v_\theta^{(i)}(I_t^{(i)}, t), I_1^{(i)} \right).
    \label{inter loss}
\end{equation}
\subsection{Model Training}
\label{training}
\textbf{Network Architecture.} To design the binding protein pocket $\mathcal{R}=\{\vc^{(i)}, \vx^{(i)}, \bm{O}^{(i)}, \bm{\chi}^{(i)}, I^{(i)}\}_{i=1}^{N_r}$ and update the binding ligand coordinates $\{\vx^{(i)}\}_{i=1}^{N_l}$, we utilize an architecture modified from the FrameDiff \cite{yim2023se} which incorporates Invariant Point Attention (IPA) from AF2 \cite{jumper2021highly} to encode spatial features combined with transformer layers \cite{vaswani2017attention} to encode sequence-level features. 
To achieve a unified representation of both protein residues and ligand atoms, we follow the approach used in RoseTTAFold All-Atom \cite{krishna2024generalized}, where each ligand atom is treated as an individual residue. Initial representations are based on atom element type embeddings, and the frame orientations are set as identity matrices.
To further model the covalent bonding information (single bond, double bond, triple bond, or aromatic bond), we also add the bond embeddings to the 2D track.
We use additional MLPs based on the residue embeddings to predict the residue types, interaction types, and sidechain torsion angles. Instead of directly predicting the vector field, we let the model predict the final structure at $t=1$ and interpolate to obtain the vector field.
More details are introduced in the Appendix. \ref{supp: model details}.

\textbf{Overall Training Loss.} The overall training loss of PocketFlow is the summation of Equ. \ref{trans loss}, \ref{rot loss}, \ref{torsion loss}, \ref{res loss}, and \ref{inter loss}. To fully utilize the protein-ligand context information, we use the whole protein-ligand complex structure at $t$, i.e., $\mathcal{C}_t = \mathcal{P}_t \cup \mathcal{G}_t$ as the inputs of $v_\theta(\cdot, t)$.


\textbf{Equivariance.} Following \cite{yim2023fast, lin2024ppflow}, we perform all training and sampling within the zero center of mass (CoM) subspace by subtracting the CoM of the scaffold from the initialized structure. PocketFlow has the ideal SE(3)-equivariance property of geometric generative models:
\begin{theorem}
Denote the SE(3)-transformation as $T_g$, 
PocketFLow \( p_\theta(\mathcal{R}, \mathcal{G}|\mathcal{P}\setminus\mathcal{R}) \) is SE(3) equivariant i.e., \( p_\theta(T_g (\mathcal{R}, \mathcal{G})|T_g (\mathcal{P}\setminus\mathcal{R})) = p_\theta(\mathcal{R}, \mathcal{G}|\mathcal{P}\setminus\mathcal{R}) \), where $\mathcal{R}$ denotes the designed pocket, $\mathcal{G}$ is the binding ligand, and $\mathcal{P}\setminus\mathcal{R}$ is the protein scaffold. 
\label{theorem}
\end{theorem}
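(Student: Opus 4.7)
The plan is to reduce the theorem to two ingredients: (i) invariance of the prior distribution $p_0$ under SE(3), and (ii) SE(3)-equivariance of the learned vector field $v_\theta$. Once these are in place, the conclusion follows from the standard fact that integrating an equivariant vector field ODE starting from an invariant prior produces an invariant (equivalently, equivariant in the conditional sense) pushforward. Throughout, I would first handle translations by exploiting the zero-CoM convention: subtracting the scaffold center of mass before running the flow makes the whole construction manifestly translation-invariant, so only the $\mathrm{SO}(3)$ part requires real argument.

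Next I would verify prior invariance component by component. For the $C_\alpha$ coordinates of ligand atoms, the prior is an isotropic Gaussian centered at the ligand CoM, which is rotation-invariant about that center; combined with CoM centering this gives the full $\mathrm{SE}(3)$ invariance. For pocket $C_\alpha$ positions, the linear inter/extrapolation from neighboring scaffold residues commutes with $T_g$, so the initialization transforms covariantly. For orientations $\mO^{(i)}$, the prior is the Haar (uniform) measure on $\mathrm{SO}(3)$, which is bi-invariant under left/right multiplication by $T_g$. For sidechain torsions $\bm{\chi}^{(i)}\in\mathbb{T}^{4N_r}$, residue types $\vc^{(i)}$, and interaction types $I^{(i)}$, the variables are intrinsic and thus unaffected by $T_g$, so their uniform/categorical priors are trivially invariant.

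The main obstacle is establishing equivariance of $v_\theta$ in Paragraphs~\ref{se3}--\ref{residue types}. I would argue this structurally by tracing the data flow through the architecture. Inputs to the IPA-based backbone consist of (a) pairwise invariants such as distances, relative frame rotations, chemical bond embeddings, and (b) scalar features (atom/residue/interaction type embeddings). All of these are invariant under the diagonal action of $T_g$ on $(\mathcal{R},\mathcal{G},\mathcal{P}\setminus\mathcal{R})$, so every hidden scalar representation is $T_g$-invariant. The IPA layers then produce coordinate updates in the \emph{local} frame of each residue, and mapping those updates back to the global frame introduces exactly one factor of $\mO^{(i)}$, which rotates covariantly with $T_g$. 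Consequently the predicted $\hat{\vx}_1^{(i)}$ transforms as a point, $\hat{\mO}_1^{(i)}$ transforms as a rotation (left-multiplied by $g$), and the predicted torsion/residue/interaction logits are scalars. Because the conditional flows are defined via geodesic interpolation that commutes with left-multiplication on $\mathbb{R}^3$ and $\mathrm{SO}(3)$, the induced vector fields $v_\theta$ in Equations~\ref{trans loss}--\ref{inter loss} satisfy $v_\theta(T_g x, t) = dT_g\, v_\theta(x,t)$ on the tangent spaces.

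Finally, I would combine the two ingredients. Denote by $\psi_t$ the flow associated with $v_\theta$ and by $\Phi_g$ the action of $T_g$ on the product manifold $\mathrm{SE}(3)^{N_r}\times\mathbb{R}^{3N_l}\times\mathbb{T}^{4N_r}\times\Delta^{19}\times\Delta^{4}$. Equivariance of $v_\theta$ gives $\psi_t \circ \Phi_g = \Phi_g \circ \psi_t$, and invariance of $p_0$ gives $(\Phi_g)_\# p_0 = p_0$. Hence
\begin{equation}
(\Phi_g)_\# p_1 \;=\; (\Phi_g)_\# (\psi_1)_\# p_0 \;=\; (\psi_1)_\# (\Phi_g)_\# p_0 \;=\; (\psi_1)_\# p_0 \;=\; p_1,
\end{equation}
which is exactly the claimed identity $p_\theta(T_g(\mathcal{R},\mathcal{G})\mid T_g(\mathcal{P}\setminus\mathcal{R})) = p_\theta(\mathcal{R},\mathcal{G}\mid \mathcal{P}\setminus\mathcal{R})$. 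The only place where care is needed is verifying that the discrete-type updates genuinely yield an invariant marginal; this follows because $T_g$ acts trivially on the type simplex, so the cross-entropy-based conditional flow in Equations~\ref{res loss}--\ref{inter loss} is independent of $g$ once the scalar features feeding $v_\theta$ are shown to be invariant, which was established above.
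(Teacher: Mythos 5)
Your proposal is correct and follows essentially the same strategy as the paper's proof: an SE(3)-invariant (scaffold-conditioned) prior together with SE(3)-equivariant dynamics yields an invariant conditional likelihood \( p_\theta(T_g(\mathcal{R},\mathcal{G})\mid T_g(\mathcal{P}\setminus\mathcal{R})) = p_\theta(\mathcal{R},\mathcal{G}\mid\mathcal{P}\setminus\mathcal{R}) \). The difference is only presentational: the paper establishes this by writing the likelihood as an integral of the invariant prior against a product of equivariant discrete-time transition kernels, whereas you argue at the continuous level via commutation of the ODE flow with the group action and pushforward of measures, additionally spelling out the component-wise prior invariance and the IPA-based equivariance of \(v_\theta\) that the paper asserts without detailed verification.
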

The main idea is that the SE(3)-invariant prior and SE(3)-equivariant neural network lead to an SE(3)-equivariant generative process of PocketFlow. We give the full proof in the Appendix. \ref{proof of equivariance}.

\subsection{Pocket Sampling with Prior Guidance}
\label{sampling}
To improve the binding affinity and structural validity of the generated protein pocket, we proposed a novel domain-knowledge-guided sampling scheme. Generally, we use classifier-guided sampling \cite{dhariwal2021diffusion} and consider overall binding affinity guidance and interaction geometry guidance. To encourage the generated protein-ligand complex to satisfy a specific condition $y$, we apply the Bayes rule \cite{dhariwal2021diffusion, guan2023decompdiff}:
\begin{equation}
    \nabla_{\mathcal{C}_t} \log p(\mathcal{C}_t|y) = \nabla_{\mathcal{C}_t} \log p(\mathcal{C}_t) + \nabla_{\mathcal{C}_t} \log p(y|\mathcal{C}_t),
\end{equation}
where $\nabla_{\mathcal{C}_t} \log p(\mathcal{C}_t)$ is the unconditional vector field $v_\theta(\mathcal{C}_t, t)$ and 
$\nabla_{\mathcal{C}_t} \log p(y|\mathcal{C}_t)$ is the guidance term to constrain the generated complex in a specific condition $y$. 

\textbf{Binding Affinity Guidance.} To generate protein pockets with higher binding affinity to the target ligand, we train a separate lightweight affinity predictor for guidance (More details of the predictor in Appendix. \ref{affinity predictor}). Specifically, the data points in the training set are annotated 1 if their affinity is higher than the average score of the dataset, otherwise 0 \cite{qian2024kgdiff}. Because the intermediate structure is noisy, we take the expected structure at $t=1$, i.e., $\hat{\mathcal{C}_1}(\mathcal{C}_t)$ from the model output
and feed it into the predictor. Then we have the classifier-guided velocity field $\tilde{v}_\theta(\mathcal{C}_t, t)$:
\begin{equation}
    \tilde{v}_\theta(\mathcal{C}_t, t) = v_\theta(\mathcal{C}_t, t) - \gamma \nabla_{\mathcal{C}_t} \log p_\theta(v_b=1|\hat{\mathcal{C}_1}(\mathcal{C}_t)),
    \label{affinity guidance}
\end{equation}
where we add a scaling factor $\gamma >0 $ that controls the gradient strength.
$p_\theta$ is the affinity predictor and $v_b \in \{ 0,1\}$ is the binary label of binding affinity. 

\textbf{Interaction Geometry Guidance.} 
Inspired by \cite{zhung20243d, zhang2023learning}, we considered 4 dominant non-covalent interaction types in PocketFlow, including salt bridges, $\pi$–$\pi$ stacking, hydrogen bonds, and hydrophobic interactions. The local geometries need to satisfy a series of distance/angle constraints to form strong interactions \cite{salentin2015plip}. For example, for hydrogen bonds, the distances between donor and acceptor atoms need to be less than 4.1 {\AA} and larger than 2 {\AA} to reduce steric clashes \cite{hubbard2010hydrogen}. The following inequality is a necessary condition for residues in $\hat{\mathcal{C}_1}(\mathcal{C}_t)$ with predicted interaction label $\hat{I}_1$ as hydrogen bond:
\begin{equation}
l_{\text{min}} \leq \min_{i \in \mathcal{A}_{hbond}^{(k)}, j \in \mathcal{G}} \left\| \vx^{(i)} - \vx^{(j)} \right\|_2 \leq l_{\text{max}},
\end{equation}
where \( l_{\text{min}} \) and \( l_{\text{max}} \) are distance constraints; $\mathcal{A}_{hbond}^{(k)}$ denote the $k$-th residue in the set of residues with predicted hydrogen bonds. With a little abuse of notations, $\vx^{(i)}$ and $ \vx^{(j)}$ denote the candidate atom coordinates in the residue and ligand respectively.
The distance guidance can be derived as:
\begin{equation}
-\nabla_{\mathcal{C}_t} \sum_{k=1}^{|\mathcal{A}_{hbond}|} \left[ \xi_1 \max \left( 0, d^{(k)} - l_{\text{max}} \right) + \xi_2 \max \left( 0, l_{\text{min}} - d^{(k)} \right) \right],
\label{distance guidance}
\end{equation}
where \( d^{(k)} = \min_{i \in \mathcal{A}_{hbond}^{(k)}, j \in \mathcal{G}} \left\| {\vx}^{(i)} - {\vx}^{(j)} \right\|_2 \) and \( \xi_1, \xi_2 > 0 \) are constant coefficients that control the strength of guidance. The detailed deduction is included in the Appendix. \ref{geometry guidance}.
Besides the distance constraint, the hydrogen bond needs to satisfy the acceptor/donor angle constraint \cite{salentin2015plip}, e.g., the donor/acceptor angle needs to be larger than 100$^\circ$. The angle guidance is presented as follows:
\begin{equation}
    -\xi_3\nabla_{\mathcal{C}_t}  \sum_{k=1}^{|\mathcal{A}_{hbond}|} \max(0, \alpha_{\text{min}} - \phi^{(k)}),
    \label{angle guidance}
\end{equation}
where $\phi^{(k)} = \max_{i \in \mathcal{A}_{hbond}^{(k)}, j \in \mathcal{G}} \text{hangle}({\vx}^{(i)}, {\vx}^{(j)})$ and $\text{hangle}(\cdot,\cdot)$ calculates the acceptor/donor angle in Figure. \ref{donor angle}. 
\(\xi_3 > 0\) is the guidance coefficient. The guidance for the other interactions is discussed in Appendix. \ref{supp: guidance}. 
We note that the residue type/side chain structure of the pocket is not determined during the sampling. Directly sampling from the residue type distribution makes the model not differentiable \cite{jang2016categorical}. We propose the \textbf{Sidechain Ensemble} for the interaction geometry calculation, i.e., the weighted sum of geometric guidance with respect to residue types (Figure. \ref{superposition}). 

\textbf{Sampling.} 
With the initialized data, the sampling process is the integration of the ODE \(\frac{d\mathcal{C}_t}{dt} = v_\theta(\mathcal{C}_t, t)\) from $t = 0$ to $t=1$ with an Euler solver \cite{brenan1995numerical}. $\gamma, \xi_1, \xi_2,$ and $\xi_3$ are set as 1 in the default setting. To apply the guidance, we use $\tilde{v}_\theta$ which is $v_\theta$ plus guidance terms (Equ. \ref{affinity guidance}, \ref{distance guidance}, and \ref{angle guidance}):
\begin{align}
    \bm{\chi}^{(i)}_{t+\Delta t} &= \text{reg}\left(\bm{\chi}^{(i)}_t + \tilde{v}_\theta(\bm{\chi}_t^{(i)}, t) \Delta t\right); \\
\vx^{(i)}_{t+\Delta t} &= \vx^{(i)}_t + \tilde{v}_\theta(\vx^{(i)}_t, t) \Delta t; \quad
\mO^{(i)}_{t+\Delta t} = \mO^{(i)}_t \exp\left(\tilde{v}_\theta(\mO^{(i)}_t, t) \Delta t\right); \\
\vc^{(i)}_{t+\Delta t} &= \text{norm}\left(\vc^{(i)}_t + \tilde{v}_\theta(\vc^{(i)}_t, t) \Delta t\right);\quad I^{(i)}_{t+\Delta t} = \text{norm}\left(I^{(i)}_t + \tilde{v}_\theta(I^{(i)}_t, t) \Delta t\right);
\end{align}
where $\Delta t$ is the time step; $v_\theta(\cdot; t)$ denotes the subcomponent of the vector field for different variables. $\text{norm}(\cdot)$ means normalizing the vector to a probability vector such that the summation is 1, and $\text{reg}(\cdot)$ means regularizing the torsion angles by $\text{reg}(\tau) = (\tau + \pi) \mod (2\pi) - \pi.$

\section{Experiments}
\subsection{Experimental Settings}
\label{experimental settings}
\textbf{Datasets.} 
Following previous works \cite{guan20233d, schneuing2022structure, zhang2023full} 
we consider two widely used protein-small molecule binding datasets for experimental evaluations: \textbf{CrossDocked} dataset \cite{francoeur2020three} is generated through cross-docking and is split with mmseqs2 \cite{steinegger2017mmseqs2} at 30$\%$ sequence identity, leading to train/val/test set of 100k/100/100 complexes. 
\textbf{Binding MOAD} dataset \cite{hu2005binding} contains experimentally determined protein-small molecule complexes and is split based on the proteins’ enzyme commission number \cite{bairoch2000enzyme}, resulting in 40k protein-small molecule pairs for training, 100 pairs for validation, and 100 pairs for testing.
To test the generalizability of PocketFlow to other ligand modalities, we further consider 
\textbf{PPDBench} \cite{agrawal2019benchmarking}, which contains 133 non-redundant complexes of protein-peptides and
\textbf{PDBBind RNA} \cite{wang2005pdbbind}, which contains 56 protein-RNA pairs by filtering the PDBBind nucleic acid subset. More details of data preprocessing are included in the Appendix. \ref{preprocessing}.
Considering the distance ranges of protein-ligand interactions \cite{marcou2007optimizing}, we redesign all the protein residues that contain atoms within 3.5 {\AA}  of any binding ligand atoms, i.e., the pocket area. The number of designed residues is set the same as the reference pocket.
We sample 100 pockets for each complex in the test set for evaluation. 

\textbf{Baselines and Implementation. }
PocketFlow is compared with four state-of-the-art representative baseline methods. 
\textbf{DEPACT} \cite{chen2022depact} is a template-matching method \cite{zanghellini2006new} for pocket design by searching and grafting residues. \textbf{dyMEAN} \cite{kong2023end} and \textbf{FAIR} \cite{zhang2023full} are deep-learning methods based on equivariant translation and iterative refinement. 
\textbf{RFDiffusionAA} \cite{krishna2023generalized} is the latest version of RFDiffusion \cite{watson2023novo}, which can directly generate protein pocket structures conditioned on the ligand. Different from the sequence-structure co-design scheme in dyMEAN and FAIR, the residue types and sidechain structures in RFDiffusionAA are determined with LigandMPNN \cite{dauparas2023atomic} in a post-hoc manner. 

For experiments on PPDBench and PDBBind RNA, we pretrain PocketFlow and baseline models on the combination of the CrossDocked and Binding MOAD dataset, in which we carefully eliminate all complexes with the same PDB IDs to avoid potential data leakage. Then we represent the peptide and RNA similar to small molecules (atom coordinates/types and covalent bonds) and input to PocketFlow for sampling without fine-tuning.  
For simplicity, the structure of peptide/RNA is set fixed. 
All the baselines are run on the same Tesla A100 GPU. 


\begin{table}[t]
\small
\vskip -0.1in
\caption{Evaluation of different models on \textbf{small-molecule-binding} protein pocket design. We report the average and standard deviation values across three independent runs. We highlight the best two results with \textbf{bold text} and \underline{underlined text}, respectively.}
\label{tab:main}
\begin{center}
\scalebox{0.82}{\begin{tabular}{ccccccccc}
\toprule
\multirow{2}{*}{Model} & \multicolumn{3}{c}{CrossDocked}             &  & \multicolumn{3}{c}{Binding MOAD} \\ \cline{2-4} \cline{6-8}
                        & AAR (↑) & scRMSD (↓)  &Vina (↓)  & & AAR (↑)                & scRMSD (↓)  &Vina (↓)\\ \midrule
Test set & -&0.65 &{-7.016} &&- &0.67 &{-8.076}\\
DEPACT        & 31.52$\pm$3.26\%       & 0.73$\pm$0.06    &  -6.632$\pm$0.18    &  & 35.30$\pm$2.19\%          & 0.77$\pm$0.08    & -7.571$\pm$0.15      \\
dyMEAN   & 38.71$\pm$2.16\% & 0.79$\pm$0.09  & -6.855$\pm$0.06   &  &   41.22$\pm$1.40\%          & 0.80$\pm$0.12   & -7.675$\pm$0.09            \\ 
FAIR                   & {40.16$\pm$1.17\%}          & {0.75$\pm$0.03}&{-7.015$\pm$0.12} &  & {43.68$\pm$0.92\%}       & {0.72$\pm$0.04} & {-7.930$\pm$0.15} \\ 
RFDiffusionAA       & 50.85$\pm$1.85\%       & 0.68$\pm$0.07    &    -7.012$\pm$0.09  &  & 49.09$\pm$2.49\%          & 0.70$\pm$0.04     &-8.020$\pm$0.11     \\ \midrule
PocketFlow                   & \textbf{52.19$\pm$1.34\%}          & {0.67$\pm$0.04}&\textbf{-8.236$\pm$0.16} && \textbf{54.30$\pm$1.70\%}          & \underline{0.68$\pm$0.03}&\textbf{-9.370$\pm$0.24}&   \\
w/o Aff Guide                  & \underline{50.94$\pm$1.37\%}          & \textbf{0.65$\pm$0.04}&-7.375$\pm$0.10 && 51.43$\pm$1.52\%          & \textbf{0.64$\pm$0.04}&-8.380$\pm$0.19& \\
w/o Geo Guide                  & 49.80$\pm$1.41\%          & {0.68$\pm$0.03}&\underline{-8.120$\pm$0.14} && \underline{53.49$\pm$1.53\%}          & 0.71$\pm$0.05&\underline{-9.197$\pm$0.22}&\\
w/o Geo \& Aff Guide                  & 48.50$\pm$1.66\%          & 0.71$\pm$0.06&-7.135$\pm$0.13 && 49.71$\pm$1.68\%          & 0.69$\pm$0.03&-8.241$\pm$0.18&\\
w/o Inter Learning                  & 50.72$\pm$1.20\%          & \underline{0.66$\pm$0.03}&-7.968$\pm$0.15 && 52.25$\pm$1.74\%          & \underline{0.68$\pm$0.05}&-9.031$\pm$0.17& \\
\bottomrule
\end{tabular}}
\end{center}
\vskip -0.2in
\end{table}
\begin{figure*}[t]
    \centering
    \includegraphics[width=0.99\linewidth]{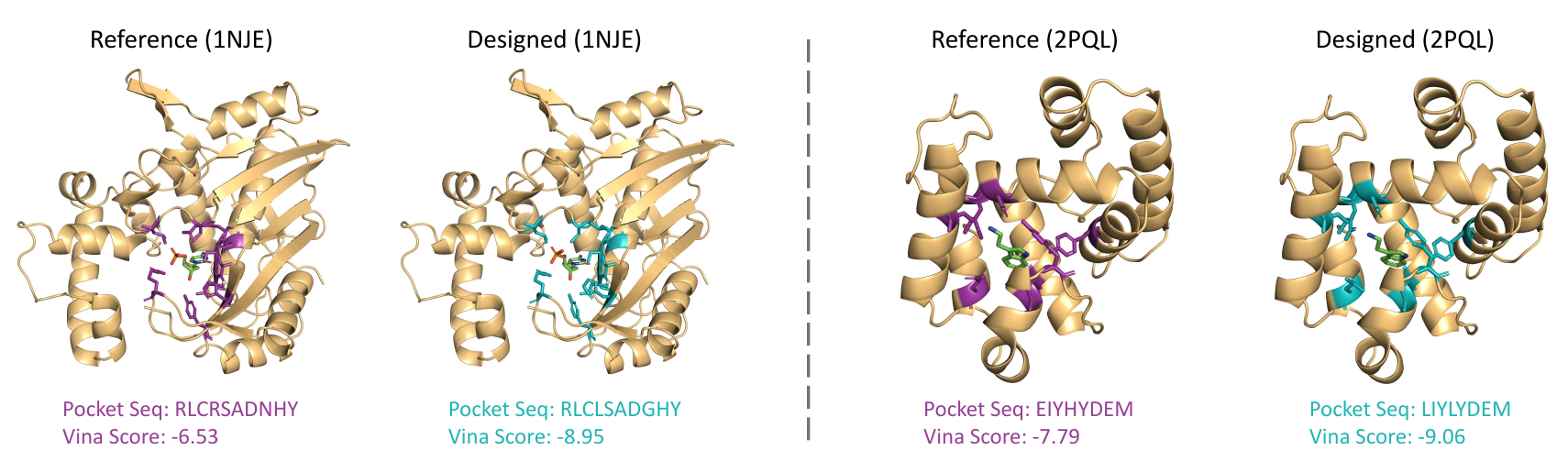}
    \vskip -0.1 in
    \caption{Case studies of small-molecule-binding protein pocket design. We show the reference and designed structures/sequences of
two protein pockets from the CrossDocked (PDB ID: 1NJE) and Binding MOAD (PDB ID: 2PQL) datasets respectively.
    }\vskip -0.2in
    \label{case pocket}
\end{figure*}

\textbf{Performance Metrics.} 
We employ the following metrics to comprehensively evaluate the validity of the generated pocket sequence and structure.
Amino Acid Recovery (\textbf{AAR}) is defined as the overlapping ratio between the predicted and ground truth residue types. In bioengineering, mutating too many residues may lead to instability or failure to fold \cite{aldeghi2018accurate}. Therefore, a larger AAR is more favorable. \textbf{scRMSD} refers to self-consistency Root Mean Squared Deviation between the generated and the predicted pocket’s backbone atoms to reflect structural validity. Following established pipelines \cite{trippe2023diffusion, lin2023generating}, for each generated protein structure, eight sequences are firstly derived by ProteinMPNN \cite{dauparas2022robust} and the folded to structures with ESMFold \cite{lin2023evolutionary}. we report the minimum scRMSD for the predicted structures. 
To measure the binding affinity for protein-small molecule pairs, we calculate \textbf{Vina Score} with AutoDock Vina \cite{trott2010autodock} following \cite{peng2022pocket2mol, zhang2023full}. For protein-peptide and protein-RNA pairs, we calculate \textbf{Rosetta} $\bm\Delta \bm\Delta \bm G$ \cite{alford2017rosetta} and \textbf{Rosetta-Vienna RNP-}$\bm\Delta \bm \Delta \bm G$ \cite{kappel2019blind} respectively that measure the binding affinity change.
The unit is kcal/mol and a lower Vina score/$\Delta \Delta G$ indicates higher affinity. 


\subsection{Small-molecule-binding Pocket Design}
Table~\ref{tab:main} shows the results of different methods on the CrossDocked and Binding MOAD dataset for small-molecule-binding pocket design. We can observe that PocketFlow overperforms baseline models with a clear margin on AAR, scRMSD, and Vina scores, demonstrating the strong ability of PocketFlow to design pockets with high validity and affinity. The average improvements over the RFDiffusionAA on AAR, scRMSD, and Vina Score are 3.3$\%$, 0.05, and 1.29 respectively. 
PocketFlow also predicts more aligned sidechain angles with ground truth as evidenced in Table. \ref{sidechain angles}. 
Compared with baselines, PocketFlow enjoys the advantage of powerful flow-matching architecture and effective physical/chemical prior guidance. Different from the post-hoc manner of deriving sequences in RFDiffusionAA \cite{krishna2023generalized}, the co-design scheme also encourages sequence-structure consistency. We also compare the \emph{generation efficiency} of different models in Figure. \ref{time}. Considering the pocket quality improvement brought by PocketFlow, the time overhead is acceptable. 

We also perform ablation studies in Table.~\ref{tab:main}, where w/o Aff Guide, w/o Geo Guide, and w/o Geo $\&$ Aff Guide indicate generating pockets without Affinity Guidance, Interaction Geometry Guidance, and all Guidance respectively. In w/o Inter Learning, we retrain a model without learning interaction types and generate pockets without Interaction Geometry Guidance as well. We can observe that the Affinity and Geometry Guidance indeed play critical roles in enhancing binding affinity and structural validity. For example, the Vina score drops to -7.135 without guidance from -8.236 on the CrossDocked dataset. We also note Affinity Guidance may have slight side effects on scRMSD and we need to balance the strength of unconditional and guidance terms (more results in Appendix. \ref{more results}).

\begin{table}[t]
\small
\vskip -0.1in
\caption{Evaluation of different approaches on the \textbf{peptide} and \textbf{RNA} datasets. DEPACT is not reported here because it is specially designed for small molecules. dyMEAN, FAIR, and PocketFlow are pretrained on protein-small molecule datasets and we use the checkpoint of RFDiffusionAA \cite{rfaa}.}
\label{tab:main1}
\begin{center}
\scalebox{0.82}{\begin{tabular}{ccccccccc}
\toprule
\multirow{2}{*}{Model} & \multicolumn{3}{c}{PPDBench}             &  & \multicolumn{3}{c}{PDBBind RNA} \\ \cline{2-4} \cline{6-8}
                        & AAR ($\uparrow$) & scRMSD ($\downarrow$)  & $\Delta\Delta G$ ($\downarrow$)  & & AAR ($\uparrow$)                & scRMSD ($\downarrow$)  & $\Delta\Delta G$ ($\downarrow$)\\ \midrule
Test set & -&0.64 &- &&- &0.59 &-\\
dyMEAN   & 26.29$\pm$1.05\% & 0.71$\pm$0.05  & -0.23$\pm$0.04   &  &   25.90$\pm$1.22\%          & 0.71$\pm$0.04   & -0.18$\pm$0.03            \\ 
FAIR                   & {32.53$\pm$0.89\%}          & {0.86$\pm$0.04}&{0.05$\pm$0.07} &  & {24.90$\pm$0.92\%}       & {0.80$\pm$0.05} & {0.13$\pm$0.05} \\ 
RFDiffusionAA       & 46.85$\pm$1.45\%       & \textbf{0.65$\pm$0.06}    &    -0.62$\pm$0.05  &  & \textbf{44.69$\pm$1.90\%}          & \textbf{0.65$\pm$0.03}     &-0.45$\pm$0.07     \\ \midrule
PocketFlow                   & \textbf{48.19$\pm$1.34\%}          & \underline{0.67$\pm$0.04}&\textbf{-1.06$\pm$0.04} & &\underline{44.34$\pm$1.16\%}          & 0.69$\pm$0.01&\textbf{-0.78$\pm$0.07} &   \\ 
w/o Aff Guide                  & \underline{47.78$\pm$1.18\%}          & 0.70$\pm$0.02&-0.47$\pm$0.10 && 42.15$\pm$1.56\%          & \underline{0.68$\pm$0.04}&-0.35$\pm$0.11& \\
w/o Geo Guide                  & 47.30$\pm$1.94\%          & 0.72$\pm$0.05&\underline{-0.96$\pm$0.08} && 41.73$\pm$2.34\%          & 0.77$\pm$0.09&\underline{-0.65$\pm$0.15}&\\
w/o Geo \& Aff Guide                  & 44.63$\pm$1.79\%          & 0.78$\pm$0.05&-0.31$\pm$0.05 && 39.70$\pm$1.24\%          & 0.78$\pm$0.06&-0.26$\pm$0.08&\\
w/o Inter Learning                  & 36.41$\pm$1.38\%          & 0.74$\pm$0.06&-0.34$\pm$0.05 && 36.27$\pm$1.47\%          & 0.82$\pm$0.13&-0.23$\pm$0.06& \\
\bottomrule
\end{tabular}}
\end{center}
\end{table}

\begin{figure*}[t]
    \centering
    \includegraphics[width=0.99\linewidth]{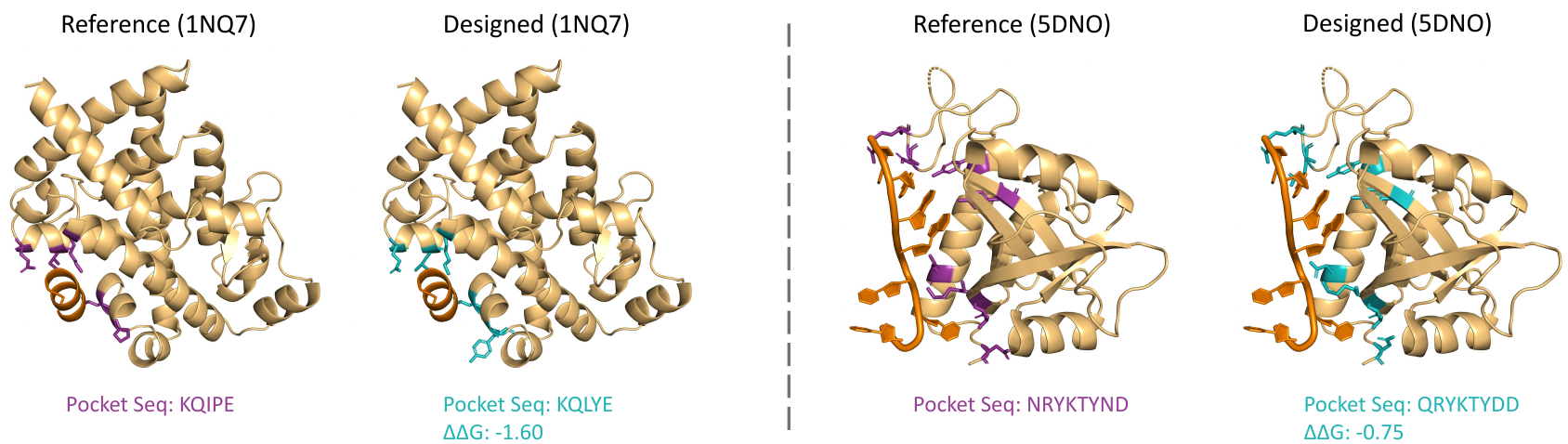}
    \vskip -0.1in
    \caption{Case studies of peptide/RNA-binding protein pocket design. We show the reference and designed structures/sequences of
two protein pockets from PPDBench (PDB ID: 1NQ7) and PDBBind RNA (PDB ID: 5DNO) datasets respectively. The ligand structures (orange) are set fixed. 
    }
    \vskip -0.2in
    \label{case rna}
\end{figure*}

\subsection{Generalization to Other Ligand Domains}
Besides small molecules, the binding of protein with other ligand modalities such as peptides and nucleic acids play critical roles in biomedicine \cite{wu2023novo, baek2024accurate}. However, the available dataset size compared with small molecules-protein complexes is quite limited (e.g., $\sim$ 100 in PPDBench v.s. over 100k in CrossDocked). Here, we explore whether the pretrained PocketFlow on the combination of CrossDocked and Binding MOAD can generalize to peptide and RNA-binding pocket design in Table. \ref{tab:main1}. The peptide and RNA ligands are represented as molecules (atoms and covalent bonds) to fit into the pretrained models. We have observed that PocketFlow achieves performance comparable to the state-of-the-art baseline, RFDiffusionAA, with prior guidance significantly enhancing its generalizability. Our hypothesis is that the protein-ligand interactions and fundamental physical laws learned by PocketFlow are applicable universally across various biomolecular domains \cite{zhang2023learning, zhung20243d}. By explicitly incorporating physical and chemical priors into the generative model, PocketFlow not only aligns with these universal principles but also gains a marked advantage of generalizability.
\subsection{Interaction Analysis and Case Studies}
We adopt PLIP \cite{salentin2015plip} and posecheck \cite{harris2023posecheck} to detect the protein-ligand interactions in the generated pockets. 
In Table. \ref{tab:inter}, we show the average number of steric clashes, hydrogen bond donors, acceptors, and hydrophobic interactions (without redocking). We observe that PocketFlow can generate pockets with fewer clashes and more favorable interactions. For example, the average steric clashes for RFDiffusionAA and PocketFlow are 3.58 and 1.21 respectively. 
The average number of Hydrogen Bonds for RFDiffusionAA and PocketFlow are 3.76 and 4.12 respectively. These improvements can be attributed to the model’s affinity/geometry guidance and its enhanced modeling of pocket/ligand flexibility, both of which promote the formation of advantageous protein-ligand interactions while minimizing clashes. Some interaction types such as $\pi$–$\pi$ stacking in PocketFlow are a little less than the reference, which may be due to the low frequency of these interactions in the dataset.
\begin{table*}[t]
    \centering
    \scalebox{0.82}{
    \renewcommand{\arraystretch}{1.1}
    \begin{tabular}{c|c|c|c|c|c}
    \hline
    {Methods} & Clash  ($\downarrow$) & HB ($\uparrow$) & Salt ($\uparrow$) & Hydro ($\uparrow$) & $\pi$–$\pi$ ($\uparrow$) \\
    \hline
Test set & 4.59 & {3.89} & \underline{0.26} & 5.89 &\textbf{0.32}\\
DEPACT & 6.72 & 3.10 & 0.14 & 5.70&0.16 \\
dyMEAN & 4.65 & 3.07 & 0.17 & \underline{5.85}&0.20\\
FAIR & 4.90 & 3.30 & 0.18 & 5.47&0.15 \\
RFDiffusionAA & \underline{3.58} & \underline{3.76} & 0.22 & 5.65&\underline{0.31}\\
PocketFlow & \textbf{1.21} & \textbf{4.12} & \textbf{0.27} & \textbf{6.03}&{0.28}  \\
    \hline
    \end{tabular}
    \renewcommand{\arraystretch}{1}
    }
    \caption{Interaction analysis of the generated protein pockets on the CrossDocked dataset. We measure the average number of steric clashes (\textbf{Clash}), hydrogen bonds (\textbf{HB}), salt bridges (\textbf{Salt}), hydrophobic interactions (\textbf{Hydro}), and $\pi$–$\pi$ stacking (\textbf{$\bm\pi - \bm\pi$}) per protein-ligand complex. More results on the variants of PocketFlow are included in Appendix \ref{more results}.
    }
    \vspace{-1em}
    \label{tab:inter}
\end{table*}

Figure. \ref{case pocket} and \ref{case rna} show examples of the generated pockets for small molecules, peptides, and RNA. PocketFlow recovers most residue types and changes several key residues to achieve higher binding affinity. The overall structure of the pocket, including the sidechains, is generally well-maintained.
\subsection{Limitations and Broader Impacts}
\label{limitations}
While PocketFlow is a powerful generative method for pocket generation, we find the following limitations for further improvement. First, PocketFlow is only trained on protein-small molecule datasets in the paper. In the future, incorporating protein-peptides/nucleic acids/metal datasets, even the generated data from AlphaFold3 \cite{Abramson2024Accurate} would be promising directions. 
Second, the integration of pretrained protein language models \cite{lin2022language} and structure models \cite{zhang2022protein} could significantly enhance PocketFlow's performance. Additionally, wet lab experiments to verify PocketFlow's efficacy are planned.
Potential negative impacts may include the misuse of PocketFlow for creating harmful biomolecules \cite{he2023control}. Rigorous oversight and screening access to the model should be considered.
\section{Conclusion}
In this paper, we proposed PocketFlow, a protein-ligand interaction prior-informed flow matching model for protein pocket generation. We define multimodal flow matching for protein backbone frames, sidechain torsion angles, and residue/interaction types to appropriately represent the protein-ligand complex. 
The binding affinity and interaction geometry guidance effectively improve the validity and affinity of the generated pockets.
Moreover, PocketFlow offers a unified framework covering small-molecule, nucleic acids, and peptides-binding protein pocket generation. 

\section{Acknowledgements}
This research was partially supported by grants from the National Natural Science Foundation of China (No. 623B2095,  62337001) and the Fundamental Research Funds for the Central Universities. 

\bibliographystyle{plain}
\bibliography{main}
\setcounter{theorem}{0}

\newpage
\appendix
\section{Dataset Preprocessing}
\label{preprocessing}
We consider two widely used datasets for benchmark evaluation:
\textbf{CrossDocked} dataset \cite{francoeur2020three}
contains 22.5 million protein-molecule pairs generated through cross-docking. Following previous works \cite{luo20213d, peng2022pocket2mol, zhang2023full}, 
we filter out data points with
binding pose RMSD greater than 1 {\AA}, leading to a refined subset with around 180k data points. 
For data splitting, we use mmseqs2 \cite{steinegger2017mmseqs2} to cluster data at 30$\%$ sequence identity, and randomly draw 100k
protein-ligand structure pairs for training and 100 pairs from the remaining clusters for testing and validation, respectively;
\textbf{Binding MOAD} dataset \cite{hu2005binding} contains around 41k experimentally determined protein-ligand complexes. Following previous work \cite{schneuing2022structure}, 
we keep pockets with valid and moderately ‘drug-like’ ligands with QED score $\ge$ 0.3.
We further filter the dataset to discard
molecules containing atom types $\notin \{C,N,O,S,B,Br,Cl,P,I,F\}$ as well as binding pockets with non-standard amino
acids. Then, we randomly sample and split the filtered dataset based on the Enzyme Commission Number (EC Number) \cite{bairoch2000enzyme} to ensure different sets do not contain proteins from the same EC Number main class. Finally, we have 40k protein-ligand pairs for training, 100 pairs for validation, and 100 pairs for testing. For all the benchmark tasks in this paper, PocketFlow and all the other baseline methods are trained with the same data split for a fair comparison. 

To test the generalizability of PocketFlow to other ligand modalities, we further consider 
\textbf{PPDBench} \cite{agrawal2019benchmarking}, which contains 133 non-redundant complexes of protein-peptides and
\textbf{PDBBind RNA} \cite{wang2005pdbbind}, which contains 56 protein-RNA pairs by filtering the PDBBind nucleic acid subset with RNA sequence lengths longer than 5 and less than 15.

\section{Considered Protein-ligand Interactions}
\label{sup: interaction}
\begin{table}[ht]
\centering
\caption{Key geometric constraints to define protein-ligand interactions \cite{salentin2015plip}. Angles in degree and distances in \AA ngstr\"om.}
\label{my-label}
\center
\small
\begin{tabular}{llcr}
\hline
\textbf{Variable}          & \textbf{Value} & \textbf{Description}                                        & \textbf{Ref.} \\ \hline
\texttt{INTER\_DIST\_MIN}  & 2.0 \AA        & Min. distance to avoid steric clashes & \cite{harris2023posecheck}          \\
\texttt{HYDROPH\_DIST\_MAX} & 4.0 \AA        & Max. distance of carbon atoms for a hydrophobic interaction & \cite{salentin2015plip}             \\
\texttt{HBOND\_DIST\_MAX}  & 4.1 \AA        & Max. distance between acceptor and donor in hydrogens bonds & \cite{hubbard2010hydrogen}          \\
\texttt{HBOND\_DON\_ANGLE\_MIN} & $100\degree$   & Min. angle at the hydrogen bond donor (X-D\ldots A)         & \cite{hubbard2010hydrogen}          \\
\texttt{HBOND\_ACC\_ANGLE\_MIN} & $100\degree$   & Min. angle at the hydrogen bond acceptor (X-A\ldots D)         & \cite{hubbard2010hydrogen}          \\
\texttt{PISTACK\_DIST\_MAX} & 7.5 \AA       & Max. distance between ring centers for stacking             & \cite{dougherty1996cation}           \\
\texttt{PISTACK\_ANG\_DEV} & $30\degree$       & Max. deviation from optimum angle for stacking              & \cite{salentin2015plip}\\
\texttt{PISTACK\_OFFSET\_MAX} & 2.0 \AA      & Max. offset between aromatic ring centers for stacking      & \cite{salentin2015plip}             \\
\texttt{SALTBRIDGE\_DIST\_MAX} & 5.5 \AA     & Distance between two centers of charges in salt bridges      & \cite{barlow1983ion}   \\
\bottomrule
\end{tabular}
\end{table}

Following \cite{zhung20243d}, we considered 4 dominant non-covalent interaction types in PocketFlow, including salt bridges, $\pi$–$\pi$ stacking, hydrogen bonds, and hydrophobic interactions (ranked based on their contribution to affinity and reversed frequency). The frequency statistics are listed in Table.\ref{tab:inter}.

\begin{itemize}
    \item  \textbf{Salt bridges} \cite{barlow1983ion}, which are electrostatic interactions between oppositely charged centers, are often considered among the strongest interactions in protein structures and other biomolecular complexes. They can significantly contribute to stability and binding affinity due to their strong electrostatic nature. To form salt bridges, two centers of opposite charges need to be below a distance of \texttt{SALTBRIDGE\_DIST\_MAX}.
    
    \item  \textbf{Hydrogen bonds} \cite{hubbard2010hydrogen}  occur between a hydrogen atom covalently bonded to a more electronegative atom (like oxygen or nitrogen) and another electronegative atom. Their strength is less than that of salt bridges but is significant in biological contexts.

    A hydrogen bond is established between a hydrogen bond donor and acceptor (OpenBabel \cite{o2011open} is used to detect hydrogen bond donor/acceptor). 
    The distance between the donor and acceptor needs to be less than \texttt{HBOND\_DIST\_MAX}. The donor and acceptor angle needs to be larger than \texttt{HBOND\_DON\_ANGLE\_MIN} and \texttt{HBOND\_ACC\_ANGLE\_MIN} respectively. Since PocketFlow only considers heavy atoms (no hydrogen atoms), we consider the geometry of hydrogen bonds without protonation \cite{hubbard2010hydrogen} (see Figure. \ref{donor angle}). For simplicity, we do not differentiate donor/acceptor in the interaction geometry guidance. 
    \item  \textbf{$\pi$–$\pi$ stacking} \cite{salentin2015plip} involve the stacking of aromatic rings (like those found in phenylalanine, tyrosine, or tryptophan) due to favorable van der Waals forces and sometimes electrostatic interactions. $\pi$–$\pi$ stacking is crucial in the structure of nucleic acids and proteins, especially in the active sites of many enzymes, although they are generally weaker than hydrogen bonds and salt bridges.

    To form $\pi$–$\pi$ stacking, we first need two aromatic rings (OpenBabel \cite{o2011open} is used to detect aromatic rings). The distance between the two ring centers needs to be below \texttt{PISTACK\_DIST\_MAX}. The angle between two normal vectors of ring planes needs to be below \texttt{PISTACK\_ANG\_DEV}.  Additionally, each ring center is projected
    onto the opposite ring plane. The distance between the other ring center and the projected point (i.e., the offset) has to be less than \texttt{PISTACK\_OFFSET\_MAX}. Figure. \ref{pistacking} shows the illustration.
    \item  \textbf{Hydrophobic Interactions} \cite{ben2012hydrophobic} are caused by the tendency of hydrophobic side chains to avoid contact with water, leading them to aggregate. While these are not strong interactions on their own, they play a crucial role in the folding and stability of proteins by driving the burial of nonpolar groups away from the aqueous environment, thereby contributing significantly to the overall stability. To form hydrophobic interactions, the atom distance needs to be less than \texttt{HYDROPH\_DIST\_MAX}.
\end{itemize}
\begin{figure*}[h]
    \centering
    \includegraphics[width=0.6\linewidth]{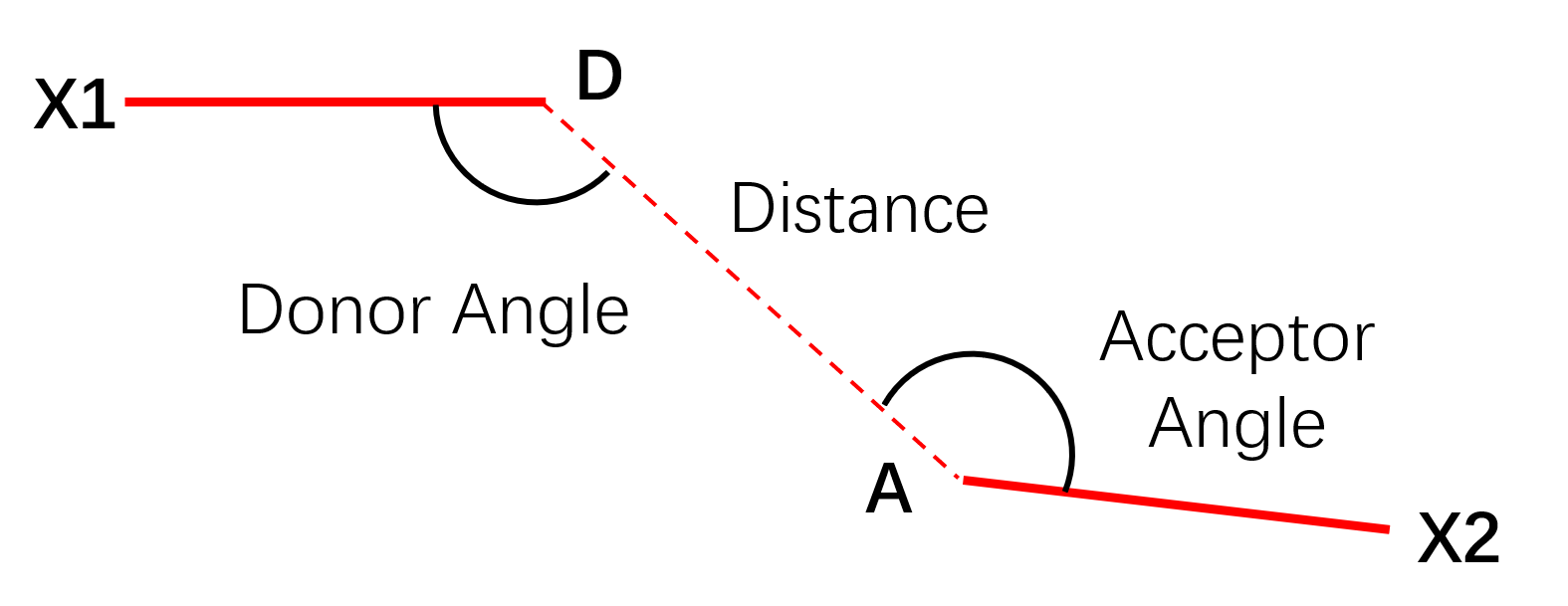}
    \caption{Schematic representation of the geometry of a \textbf{hydrogen bond} (without protonation). D and A denote the hydrogen bond donor and acceptor respectively. X1 and X2 are the neighboring atoms of donor and acceptor. The hydrogen bond distance as well as donor/acceptor angles are illustrated.}
    \label{donor angle}
\end{figure*}

\begin{figure*}[h]
    \centering
    \includegraphics[width=0.4\linewidth]{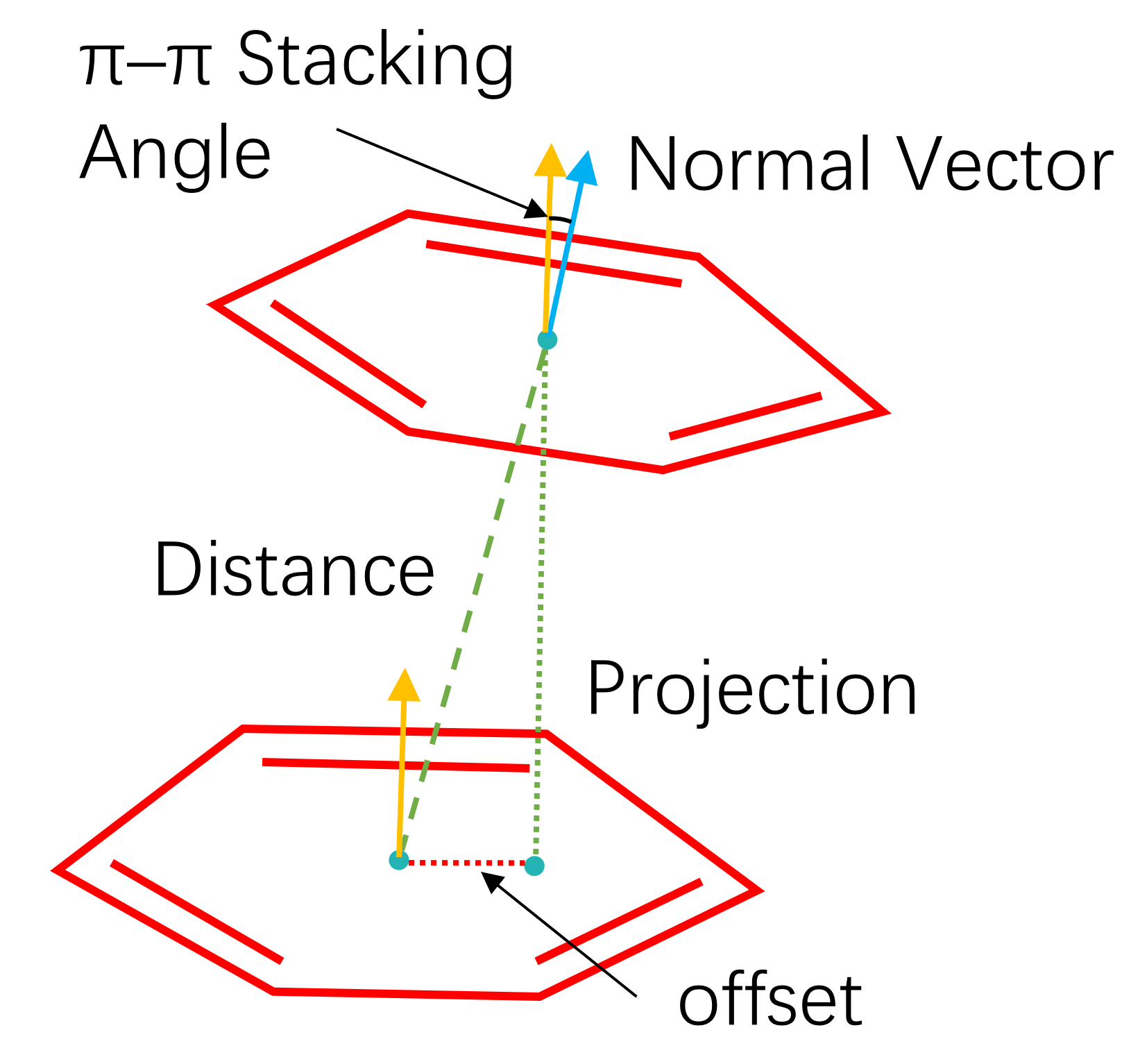}
    \caption{Schematic representation of the geometry of \textbf{$\pi$–$\pi$ stacking}. To form a $\pi$–$\pi$ stacking, we need two aromatic rings. The distance of ring centers, the angle between normal vectors, and the projection offset of the ring centers need to satisfy a set of geometry constraints.}
    \label{pistacking}
\end{figure*}

\section{Model Details}
\label{supp: model details}
In PocketFlow, we adopt a neural network architecture modified from the FrameDiff \cite{yim2023se}. This architecture consists of Invariant Point Attention from AlphaFold2 \cite{jumper2021highly} and transformer blocks \cite{vaswani2017attention}. 
In this section, 
we use superscripts to refer to the network layer and subscripts to indexes or variables.
In the network, each residue/ligand atom is represented by one embedding $h \in \mathbb{R}^{D_x}$ and a frame $T \in SE(3)$. For the ligand atoms, the orientation matrix of frame is set as identity matrices. Overall, at the $\ell$-th layer of the network, \(\vh^{\ell} = [h_1^{\ell}, \ldots, h_N^{\ell}] \in \mathbb{R}^{N \times D_x}\) are all the node embeddings where \(h_i^{\ell}\) is the embedding for the $i$-th node and $N = N_p + N_l$ is the total number of nodes; $\mT^\ell = [T^\ell_1, \ldots,T^\ell_N] \in SE(3)^N$
is the frames of every node at the $\ell$-th layer;
\(\vz^{\ell} \in \mathbb{R}^{N \times N \times D_z}\) are edge embeddings with \(z_{ij}^{\ell}\) being the embedding of the edge between residues \(i\) and \(j\). In the following paragraphs, we introduce the details of feature initialization, node/edge update, backbone update, and residue type/interaction type/sidechain torsion angle predictions.

\textbf{Feature initialization.} Following \cite{yim2023se}, node embeddings are initialized with residue indices and timestep while edge embeddings additionally get relative sequence distances. Initial embeddings at layer 0 for residues \(i, j\) are obtained with an MLP and sinusoidal embeddings \(\phi(\cdot)\) \cite{vaswani2017attention} over the features. Following \cite{yim2023se}, we additionally include self-conditioning of predicted \(C_{\alpha}\) displacements. Let \(\tilde{x}\) be the \(C_{\alpha}\) coordinates predicted during self-conditioning. 50\% of the time we set \(\tilde{x} = 0\). The binned displacement of two \(C_\alpha\) is given as,
\begin{equation}
    \text{disp}_{ij} = \sum_{k=1}^{N_{\text{bins}}} 1\{\tilde{x}^i - \tilde{x}^j < d_k\},
\end{equation}
where \(d_1, \ldots, d_{N_{\text{bins}}}\) are linspace(0, 20) are equally spaced bins between 0 and 20 angstroms. In our experiments we set \(N_{\text{bins}} = 22\). The initial embeddings can be expressed as
\begin{equation}
    h^0_i = \text{MLP}(\phi(i), \phi(t)), \quad h^0_i \in \mathbb{R}^{D_h},
\end{equation}
\begin{equation}
    z^0_{ij} = \text{MLP}(\phi(i), \phi(j), \phi(j - i), \phi(t), \phi(\text{disp}_{ij})), \quad z^0_{ij} \in \mathbb{R}^{D_z},
\end{equation}
where \(D_h, D_z\) are node and edge embedding dimensions. For the initialization of $C_\alpha$ coordinates, we use the interpolation and extrapolation strategy of FAIR \cite{zhang2023full}.

\textbf{Node update.} The process of node update is shown below. Invariant Point Attention (IPA) is from \cite{jumper2021highly}. No weight sharing is performed across layers. We use the vanilla Transformer from \cite{vaswani2017attention}. We use Multi-Layer Perceptrons (MLP) with 3 Linear layers, ReLU activation, and LayerNorm \cite{ba2016layer} after the final layer. 
\begin{align}
    \vh_{\text{ipa}} &= \text{LayerNorm}(\text{IPA}(\vh^\ell, \vz^\ell, \mT^\ell) + \vh^\ell), \quad \vh_{\text{ipa}} \in \mathbb{R}^{N\times D_h}\\
    \vh_{\text{skip}} &= \text{Linear}(\vh_0), \quad \vh_{\text{skip}} \in \mathbb{R}^{N\times D_\text{skip}}\\
    \vh_{\text{in}} &= \text{concat}(\vh_{\text{ipa}}, h_{\text{skip}}), \quad \vh_{\text{in}} \in \mathbb{R}^{N\times (D_h+D_\text{skip})}\\
    \vh_{\text{trans}} &= \text{Transformer}(\vh_{\text{in}}), \quad \vh_{\text{trans}} \in \mathbb{R}^{N\times (D_h+D_\text{skip})}\\
    \vh_{\text{out}} &= \text{Linear}(\vh_{\text{trans}}) + \vh_\ell, \quad \vh_{\text{out}} \in \mathbb{R}^{N\times D_h}\\
    \vh^{\ell+1} &= \text{MLP}(\vh_{\text{out}}), \quad \vh_{\ell+1} \in \mathbb{R}^{N\times D_h} 
\end{align}

\textbf{Edge update.} Each edge is updated with a MLP over the current edge and source and target node embeddings. In the first line, node embeddings are first projected down to half the dimension.
\begin{align}
    \vh^{\ell+1}_{\text{down}} &= \text{Linear}(\vh^{\ell+1}), \quad \vh^{\ell+1}_{\text{down}} \in \mathbb{R}^{N\times D_h/2} \\
    \vz_{ij}' &= \text{concat}(\vh_{\text{down}, i}^{\ell+1}, \vh_{\text{down},j}^{\ell+1}, \vz^\ell_{ij}), \quad  \vz_{ij}' \in \mathbb{R}^{N \times (2D_h+ D_z)}\\
    \vz^{\ell+1} &= \text{LayerNorm}(\text{MLP}(\vz')), \quad \vz^{\ell +1} \in \mathbb{R}^{N \times N \times D_z}
\end{align}

\textbf{Backbone update.} Our frame updates follow the BackboneUpdate algorithm in AlphaFold2 \cite{jumper2021highly}. We write the algorithm here with our notation,
\begin{equation}
    (b_i, c_i, d_i, x_i^{\text{update}}) = \text{Linear}(h^L_i),
\end{equation}
\begin{equation}
 (a_i, b_i, c_i, d_i) = \left( 1, b_i, c_i, d_i \right) / \sqrt{1 + b_i^2 + c_i^2 + d_i^2},
 \label{normalization}
\end{equation}
\begin{equation}
    R_i^{\text{update}} = \left( \begin{array}{ccc}
a_i^2 + b_i^2 - c_i^2 - d_i^2 & 2b_ic_i - 2a_id_i & 2b_id_i + 2a_ic_i \\
2b_ic_i + 2a_id_i & a_i^2 - b_i^2 + c_i^2 - d_i^2 & 2c_id_i - 2a_ib_i \\
2b_id_i - 2a_ic_i & 2c_id_i + 2a_ib_i & a_i^2 - b_i^2 - c_i^2 + d_i^2
\end{array} \right),
\label{rotation matrix}
\end{equation}
\begin{equation}
    T_i^{\text{update}} = (R_i^{\text{update}}, x_i^{\text{update}}),
\end{equation}
\begin{equation}
    T^{\ell+1}_i = T^{\ell}_i \cdot T_i^{\text{update}},
\end{equation}where \(b_i, c_i, d_i \in \mathbb{R}, x_i^{\text{update}} \in \mathbb{R}^3\). Equ. \ref{normalization} constructs a normalized quaternion which is then converted into a valid rotation matrix in Equ. \ref{rotation matrix}. 
Following \cite{yim2023fast, watson2023novo}, we use the planar geometry of the backbone to impute the oxygen atoms.
Note that we only update the pocket and ligand nodes in PocketFlow while setting the scaffold nodes fixed. 

\textbf{Residue/Interaction Type and Torsion angle Prediction.} We predict the residue/interaction types and sidechain torsion angles based on node embeddings. 
\begin{equation}
    \vh_{\vc} = \text{MLP}(\vh^{L}), \quad \vh_{I} = \text{MLP}(\vh^{L}), \quad \vh_{\bm\chi} = \text{MLP}(\vh^{L}),
\end{equation}
\begin{equation}
     \vc = \text{softmax}(\text{Linear}(\vh_{\vc} + \vh^{L})), I = \text{softmax}(\text{Linear}(\vh_{\psi} + \vh^{L})), 
\end{equation}
\begin{equation}
    \bm\chi =  \text{Linear}(\vh_{\bm\chi} + \vh^{L}) \text{mod} 2\pi
\end{equation}
where \(\vc \in \mathbb{R}^{N\times 20}\), \(I \in \mathbb{R}^{N,5}\), and \(\bm\chi \in [0, 2\pi)^{4N}\).
In PocketFlow, the number of network blocks is set to 8, the number of transformer layers within each block is set to 4, and the number of hidden channels used in the IPA calculation is set to 16. The node embedding size $D_h$ and the edge embedding size $D_z$ are set as 128. We removed skip connections and psi-angle prediction. For model training, we use Adam \cite{kingma2014adam} optimizer with learning rate 0.0001, $\beta_1 = 0.9$, $\beta_2 = 0.999$. We train on a Tesla A100 GPU for 20 epochs. In the sampling process, the total number of steps $T$ is set as 50.

\section{Proof of Equivariance}
\label{proof of equivariance}
\begin{theorem}
Denote the SE(3)-transformation as $T_g$, 
PocketFLow \( p_\theta(\mathcal{R}, \mathcal{G}|\mathcal{P}\setminus\mathcal{R}) \) is SE(3) equivariant i.e., \( p_\theta(T_g (\mathcal{R}, \mathcal{G})|T_g (\mathcal{P}\setminus\mathcal{R})) = p_\theta(\mathcal{R}, \mathcal{G}|\mathcal{P}\setminus\mathcal{R}) \), where $\mathcal{R}$ denotes the designed pocket, $\mathcal{G}$ is the binding ligand, and $\mathcal{P}\setminus\mathcal{R}$ is the protein scaffold. 
\end{theorem}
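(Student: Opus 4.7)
The plan is to decompose the proof into four steps: (a) reduce the full $\mathrm{SE}(3)$ action to $\mathrm{SO}(3)$ by exploiting the zero center-of-mass preprocessing; (b) show that every prior distribution used in PocketFlow is invariant (or equivariant, in the ligand coordinate case) under the relevant group action; (c) show that the learned vector field and all guidance terms are equivariant; and (d) invoke the standard fact that integrating an equivariant ODE from an equivariant initial distribution produces an equivariant push-forward. First, I would recall that PocketFlow subtracts the scaffold CoM before any training or sampling step, so the translation part of $T_g$ acting jointly on $(\mathcal{R}, \mathcal{G})$ and $\mathcal{P}\setminus\mathcal{R}$ cancels. It therefore suffices to verify the statement for $T_g \in \mathrm{SO}(3)$ acting simultaneously on the scaffold, the pocket frames, and the ligand atom positions.

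Next, I would check invariance/equivariance of each prior in turn. The $C_\alpha$ initialization is built by linear interpolation and extrapolation between neighboring scaffold residues, so rotating the scaffold rotates the initial $C_\alpha$ positions by the same element of $\mathrm{SO}(3)$; the prior over $\bm{O}^{(i)}\in\mathrm{SO}(3)$ is the uniform (bi-invariant Haar) measure; the torsion angles on $\mathbb{T}^{4N_r}$, the residue types on $\Delta^{20}$, and the interaction types on $\Delta^{5}$ are internal quantities whose uniform priors do not see the global rotation; and the ligand atom prior is a Gaussian centered at the ligand center of mass, which transforms as a vector field under rotation. Collectively these yield a joint prior $p_0(\,\cdot\mid \mathcal{P}\setminus\mathcal{R})$ whose density is $\mathrm{SO}(3)$-equivariant in the required sense.

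For the vector field, I would invoke the architectural facts about the IPA + Transformer backbone of Appendix~\ref{supp: model details}: node embeddings and pair features are constructed from relative frames and pairwise invariant scalars, so the scalar outputs (torsion updates, residue- and interaction-type logits) are $\mathrm{SO}(3)$-invariant, while the frame and coordinate updates are produced in local frames and therefore transform equivariantly when all input frames are rotated by the same $T_g$. The same argument applies to both guidance terms: the distance function $d^{(k)}$ and the hydrogen-bond angle $\phi^{(k)}$ are rotation-invariant functions of the complex, and the affinity predictor consumes only rotation-invariant features of $\hat{\mathcal{C}}_1(\mathcal{C}_t)$, so $\nabla_{\mathcal{C}_t}$ of each of these scalars is equivariant.

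Finally, I would close with the following standard lemma: if an ODE $\dot{x}=v_\theta(x,t)$ on a Riemannian manifold $\mathcal{M}$ has a $G$-equivariant vector field and $G$ acts by isometries, then the time-$1$ flow commutes with $G$, so the push-forward of an equivariant initial distribution remains equivariant; combining this on each factor of the product manifold $\mathrm{SE}(3)^{N_r}\times\mathbb{T}^{4N_r}\times\Delta^{20}\times\Delta^{5}\times\mathbb{R}^{3N_l}$ yields the theorem. The main technical obstacle is ensuring that the $\mathrm{SO}(3)$ factor behaves correctly: one must verify that the exponential and logarithm maps used in the geodesic interpolant $\bm{O}_t^{(i)} = \exp_{\bm{O}_0^{(i)}}(t\log_{\bm{O}_0^{(i)}}(\bm{O}_1^{(i)}))$ intertwine with left-multiplication by $T_g$ (which is where the choice of the bi-invariant metric becomes essential), and that the Euler solver used at inference preserves this intertwining at every discrete step. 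Once this is done, equivariance of the joint generative process, and hence invariance of the conditional density stated in Theorem~\ref{theorem}, follows.
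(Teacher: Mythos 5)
Your proposal is correct and rests on the same core principle as the paper's argument -- an SE(3)-invariant (conditionally equivariant) prior combined with SE(3)-equivariant dynamics yields an equivariant generative distribution -- but you formalize it differently. The paper discretizes the sampler into Euler-step transition kernels, writes $p_\theta(\mathcal{R}_1,\mathcal{G}_1|\mathcal{P}\setminus\mathcal{R})$ as an integral of the prior times the product of these kernels, and concludes by a change of variables using two asserted conditions (invariant prior, equivariant transition); it does not unpack either condition, and in particular never discusses the guidance terms that enter the sampled field $\tilde v_\theta$. You instead work at the level of the continuous flow: you verify the prior component-by-component (scaffold-interpolated $C_\alpha$ initialization, Haar measure on $\mathrm{SO}(3)$, internal torus/simplex variables, ligand Gaussian), argue equivariance of the IPA-based vector field and of the guidance gradients (gradients of rotation-invariant distances, angles, and EGNN affinity scores are equivariant), and close with the standard lemma that an equivariant ODE flow pushes an equivariant initial law to an equivariant law, checking that the $\mathrm{SO}(3)$ exponential/logarithm interpolant and the Euler discretization intertwine with left multiplication. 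What the paper's kernel-factorization route buys is brevity and immediate applicability to stochastic samplers without any Riemannian flow lemma; what your route buys is fidelity to the deterministic ODE sampler actually used and explicit verification of the assumptions the paper leaves implicit -- most valuably, covering the affinity and interaction-geometry guidance, which is a genuine loose end in the paper's stated proof.
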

\begin{proof}
The main idea is that the SE(3)-invariant prior and SE(3)-equivariant neural network lead to an SE(3)-equivariant generative process of PocketFlow. By subtracting the CoM of the scaffold from the initialized structure, we obtain an SE(3)-invariant prior distribution similar to \cite{yim2023se, guan20233d}. Moreover, the neural network for structure update as shown in Appendix \ref{supp: model details} is SE(3)-equivariant.
Formally, 
the two conditions to guarantee an invariant likelihood \( p_{\theta}(\mathcal{R}_1, \mathcal{G}_1|\mathcal{P}\setminus\mathcal{R}) \) are as follows (we use subscripts to denote the time steps from $t=0$ to $t=1$):
\begin{equation}
    \text{Invariant Prior:}\quad p(\mathcal{R}_0, \mathcal{G}_0, \mathcal{P}\setminus\mathcal{R}) = p(T_g(\mathcal{R}_0, \mathcal{G}_0, \mathcal{P}\setminus\mathcal{R})),
\end{equation}
\begin{equation}
    \text{Equivariant Transition:}\quad p_\theta( \mathcal{R}_{t+\Delta t}, \mathcal{G}_{t+\Delta t} | \mathcal{R}_t, \mathcal{G}_t, \mathcal{P}\setminus\mathcal{R}) = p_\theta(T_g(\mathcal{R}_{t+\Delta t}, \mathcal{G}_{t+\Delta t}) | T_g(\mathcal{R}_t, \mathcal{G}_t,\mathcal{P}\setminus\mathcal{R})),
\end{equation}
We can obtain the conclusion as follows:
\begin{align*}
    p_{\theta}(T_g(\mathcal{R}_1, \mathcal{G}_1)|T_g(\mathcal{P}\setminus\mathcal{R})) &= \int p(T_g(\mathcal{R}_0, \mathcal{G}_0, \mathcal{P}\setminus\mathcal{R})) \prod_{s=0}^{T-1} p_{\theta}(T_g(\mathcal{R}_{(s+1)\Delta t}, \mathcal{G}_{(s+1)\Delta t}) | T_g( \mathcal{R}_{s\Delta t}, \mathcal{G}_{s\Delta t}, \mathcal{P}\setminus\mathcal{R})) \\
    &= \int p(\mathcal{R}_0, \mathcal{G}_0, \mathcal{P}\setminus\mathcal{R}) \prod_{s=0}^{T-1} p_{\theta}(T_g(\mathcal{R}_{(s+1)\Delta t}, \mathcal{G}_{(s+1)\Delta t}) | T_g( \mathcal{R}_{s\Delta t}, \mathcal{G}_{s\Delta t}, \mathcal{P}\setminus\mathcal{R}))  \\
    &= \int p(\mathcal{R}_0, \mathcal{G}_0, \mathcal{P}\setminus\mathcal{R}) \prod_{s=0}^{T-1} p_{\theta}(\mathcal{R}_{(s+1)\Delta t}, \mathcal{G}_{(s+1)\Delta t} | \mathcal{R}_{s\Delta t}, \mathcal{G}_{s\Delta t}, \mathcal{P}\setminus\mathcal{R}) \\
    &= p_{\theta}(\mathcal{R}_1, \mathcal{G}_1|\mathcal{P}\setminus\mathcal{R}),
\end{align*}
where $T$ is the total number of steps. We apply the invariant prior and equivariant transition conditions in the derivation.
\end{proof}
\section{Classifier-guided Flow Matching}
\label{supp: guidance}
Here, we present the Bayesian approach to guide the flow matching with the affinity predictor. 
The key insight comes from connecting flow matching to diffusion models to which affinity guidance can be applied. Sampling a data point from the prior distribution \( p_0 \), we have the following ordinary differential equation (ODE) \cite{song2020score} that pushes it to data distribution:
\begin{equation}
d\mathcal{C}_t = v(\mathcal{C}_t, t)dt = \left[ f(\mathcal{C}_t, t) - \frac{1}{2}g(t)^2\nabla \log p_t(\mathcal{C}_t) \right] dt,
\label{ode}
\end{equation}
where \(\nabla \log p_t(\mathcal{C}_t)\) is the score function, \(f(\mathcal{C}_t, t)\) and \(g(t)\) are the drift and diffusion coefficients respectively. We modify Equ. \ref{ode} to be conditioned on the affinity label ($v_b = 1$) followed by an application of Bayes rule,
\begin{align}
d\mathcal{C}_t &= \left[ f(\mathcal{C}_t, t) - \frac{1}{2}g(t)^2\nabla \log p_t(\mathcal{C}_t | v_b=1) \right] dt\\
&=\left[ f(\mathcal{C}_t, t) - \frac{1}{2}g(t)^2 \left( \nabla \log p_t(\mathcal{C}_t) + \nabla \log p_t(v_b = 1|\mathcal{C}_t) \right) \right] dt\\
&= \left[ v(\mathcal{C}_t, t) - \frac{1}{2}g(t)^2 \nabla \log p_t(v_b = 1|\mathcal{C}_t) \right] dt,
\label{guidance}
\end{align}
where the first term is the unconditional vector field and the second term is the affinity guidance term. In practice, we do not directly predict the affinity label based on $\mathcal{C}_t$ because the intermediate structure is noisy. We use the following transformation: 
\begin{equation}
p_t(v_b = 1|\mathcal{C}_t) = \int p(v_b = 1|\mathcal{C}_1)p(\mathcal{C}_1|\mathcal{C}_t)d\mathcal{C}_t \approx  p(v_b =1 |\hat{\mathcal{C}_1}(\mathcal{C}_t)),
\end{equation}
where $\hat{\mathcal{C}_1}(\mathcal{C}_t)$ is the expected denoised protein-ligand complex structure based on $\mathcal{C}_t$. Details of the affinity predictor is introduced in the Appendix. \ref{affinity predictor}.
We need to choose \( g(t) \) such that it matches the learned probability path. Previous works \cite{dao2023flow, yim2023se} showed \( g(t)^2 = \frac{t}{1-t} \) in the Euclidean setting. For simplicity, we set $g(t)$ as constant 1 and observe good performance in experiments.

\subsection{Binding Affinity Predictor}
\label{affinity predictor}
In PocketFlow, we leverage a binding affinity predictor $p_\theta(v_b|\hat{\mathcal{C}_1}(\mathcal{C}_t))$ to guide the denoising process, where $v_b \in \{ 0,1\}$ is the binary label of binding affinity and $\hat{\mathcal{C}_1}(\mathcal{C}_t)$ is the expected protein-ligand structure at $t=1$. Following \cite{qian2024kgdiff, guan20233d}, we leverage a 3-layer EGNN \cite{satorras2021n} with the node initialized embeddings and residue/ligand atom coordinates from Appendix \ref{supp: model details}. Specifically, we take the $C_\alpha$ coordinates for the residues and ligand atom coordinates and construct $k$-NN graphs ($k$ set as 9). Let $h_i^{\ell}$ and $x_i^{\ell}$ be denote the node representations and coordinates at the $\ell-$th layer. The \((\ell + 1)-\)th layer is computed as follows:
\begin{equation}
h_i^{\ell+1} = h_i^{\ell} + \sum_{j \in \mathcal{N}, i \neq j} f_h(d_{ij}^{\ell}, h_i^{\ell}, h_j^{\ell}, e_{ij}),
\end{equation}
\begin{equation}
x_i^{\ell+1} = x_i^{\ell} + \sum_{j \in \mathcal{N}, i \neq j} (x_i^{\ell} - x_j^{\ell})f_x(d_{ij}^{\ell}, x_i^{\ell}, x_j^{\ell}, e_{ij}),
\end{equation}
where \(d_{ij}^{\ell} = \|x_i^{\ell} - x_j^{\ell}\|\) represents the Euclidean distance between node $i$ and node $j$ at the \(\ell\)-th layer, $\mathcal{N}$ denotes the $k$-NN neighbors, and \(e_{ij}\) indicates the direction of message-passing, including from protein to protein, from protein to ligand, from ligand to protein, and from ligand to ligand. The functions \(f_h\) and \(f_x\) are graph attention networks. Finally, we append an average pooling, one linear layer, and softmax operation at the end to predict the binary label of affinity.

To train the binding affinity predictor, we first annotate the data points in the corresponding training set:
data points are annotated 1 if their affinity is higher than the average score of the dataset, otherwise 0. We train the predictor separately instead of joint training with flow matching because we find it can converge more quickly than the flow matching losses. We did not train the predictor on the intermediate structures as we find they are noisy and deteriorate the predictor and PocketFlow's overall performance. In experiments, we use the Adam optimizer and train for 10 epochs. 

\subsection{Geometry Guidance}
\label{geometry guidance}
\textbf{Distance Guidance.} For hydrogen bonds, the distances between donor and acceptor atoms need to be less than 4.1 {\AA} and larger than 2 {\AA} to reduce steric clashes \cite{hubbard2010hydrogen}. The following inequality is a necessary condition for residues in $\hat{\mathcal{C}_1}(\mathcal{C}_t)$ with predicted interaction label $\hat{I}_1$ as hydrogen bond:
\begin{equation}
l_{\text{min}} \leq \min_{i \in \mathcal{A}_{hbond}^{(k)}, j \in \mathcal{G}} \left\| \vx^{(i)} - \vx^{(j)} \right\|_2 \leq l_{\text{max}},
\end{equation}
where \( l_{\text{min}} \) and \( l_{\text{max}} \) are distance constraints; $\mathcal{A}_{hbond}^{(k)}$ denote the $k$-th residue in the set of pocket residues with predicted hydrogen bonds. With a little abuse of notations, $\vx^{(i)}$ and $ \vx^{(j)}$ denote the atom coordinates in the residue and ligand respectively. We use the following derivations to obtain the guidance term for the distance constraints:

\begin{align}
    &\nabla_{\mathcal{C}_t} \log P(\{l_{\min} \leq \min_{i \in \mathcal{A}_{hbond}^{(k)}, j \in \mathcal{G}} \|\vx^{(i)} - \vx^{(j)}\|_2 \leq l_{\max}, k = 1 : |\mathcal{A}_{hbond}|\})  \\
    &=\nabla_{\mathcal{C}_t} \sum_{k=1}^{|\mathcal{A}_{hbond}|} \log P(l_{\min} \leq \min_{i \in \mathcal{A}_{hbond}^{(k)}, j \in \mathcal{G}} \|\vx^{(i)} - \vx^{(j)}\|_2 \leq l_{\max}) \\
    &= \sum_{k=1}^{|\mathcal{A}_{hbond}|} \frac{ \nabla_{\mathcal{C}_t} [P(-\min_{i \in \mathcal{A}_{hbond}^{(k)}, j \in \mathcal{G}} \|\vx^{(i)} - \vx^{(j)}\|_2 \leq -l_{\min}) \cdot P(\min_{i \in \mathcal{A}_{hbond}^{(k)}, j \in \mathcal{G}} \|\vx^{(i)} - \vx^{(j)}\|_2 \leq l_{\max})]}{P(l_{\text{min}} \leq \min_{i \in \mathcal{A}_{hbond}^{(k)}, j \in \mathcal{G}} \|\vx^{(i)} - \vx^{(j)}\| \leq l_{\text{max}})} \\
    &=\sum_{k=1}^{|\mathcal{A}_{hbond}|} \xi_1 \nabla_{\mathcal{C}_t} P(\min_{i \in \mathcal{A}_{hbond}^{(k)}, j \in \mathcal{G}} \|\vx^{(i)} - \vx^{(j)}\|_2 \leq l_{\max}) + \xi_2 \nabla_{x_t} P(-\min_{i \in \mathcal{A}_{hbond}^{(k)}, j \in \mathcal{G}} \|\vx^{(i)} - \vx^{(j)}\|_2 \leq -l_{\min}) \\
    &=\sum_{k=1}^{|\mathcal{A}_{hbond}|}  \xi_1 \nabla_{\mathcal{C}_t} \mathbb{I}(\min_{i \in \mathcal{A}_{hbond}^{(k)}, j \in \mathcal{G}} \|\vx^{(i)} - \vx^{(j)}\|_2 \leq l_{\max}) + \xi_2 \nabla_{\mathcal{C}_t} \mathbb{I}(-\min_{i \in \mathcal{A}_{hbond}^{(k)}, j \in \mathcal{G}} \|\vx^{(i)} - \vx^{(j)}\|_2 \leq -l_{\min}),
\end{align}

where $\xi_1 = 1/\nabla_{\mathcal{C}_t} P(\min_{i \in \mathcal{A}_{hbond}^{(k)}, j \in \mathcal{G}} \|\vx^{(i)} - \vx^{(j)}\|_2 \leq l_{\max})$ and $\xi_2 = 1/P(-\min_{i \in \mathcal{A}_{hbond}^{(k)}, j \in \mathcal{G}} \|\vx^{(i)} - \vx^{(j)}\|_2 \leq -l_{\min})$.
Due to the discontinuity of the indicator function $\mathbb{I}(\cdot)$ that is incompatible with the gradient, we apply $\xi - \max(0, \xi - y)$ as a surrogate of $\mathbb{I}(y < \xi)$ in the above equation. Although $\xi_1$ and $\xi_2$ are dependent on $\mathcal{C}_t$, we find setting them as constant still works well in experiments. With these approximations, we can derive guidance term for hydrogen bond distance constraints:
\begin{equation}
-\nabla_{\mathcal{C}_t} \sum_{k=1}^{|\mathcal{A}_{hbond}|} \left[ \xi_1 \max \left( 0, d^{(k)} - l_{\text{max}} \right) + \xi_2 \max \left( 0, l_{\text{min}} - d^{(k)} \right) \right],
\end{equation}
where \( d^{(k)} = \min_{i \in \mathcal{A}_{hbond}^{(k)}, j \in \mathcal{G}} \left\| {\vx}^{(i)} - {\vx}^{(j)} \right\|_2 \). Such distance guidance terms for hydrophobic interactions, salt bridges, and $\pi-\pi$ stackings are similar. The difference is to replace $\mathcal{A}_{hbond}$ with $\mathcal{A}_{hydro}$, $\mathcal{A}_{salt}$, and $\mathcal{A}_{\pi}$ that denotes the residue sets with corresponding interactions. We modify the functions in plip \footnote{https://github.com/pharmai/plip} for the ease of detecting interaction atom pair candidates. In practice, $\nabla_{\mathcal{C}_t}$ takes gradients with each component in $\mathcal{C}_t$, including $\bm\chi_t, \vx_t, \mO_t, \vc_t$, and $I_t$.

\textbf{Angle Guidance.} Besides the distance constraint, the hydrogen bond needs to satisfy the acceptor/donor angle constraint \cite{salentin2015plip}, e.g., the donor/acceptor angle needs to be larger than 100$^\circ$. $\text{hangle}(\cdot,\cdot)$ calculates the acceptor/donor angle in Figure. \ref{donor angle}. 
\begin{align}
&\nabla_{\mathcal{C}_t} \log P(\{\alpha_{\min} \leq \max_{i \in \mathcal{A}_{hbond}^{(k)}, j \in \mathcal{G}} \text{hangle}({\vx}^{(i)}, {\vx}^{(j)}), k = 1 : |\mathcal{A}_{hbond}|\})  \\
    &=\nabla_{\mathcal{C}_t} \sum_{k=1}^{|\mathcal{A}_{hbond}|} \log P(\alpha_{\min} \leq \max_{i \in \mathcal{A}_{hbond}^{(k)}, j \in \mathcal{G}} \text{hangle}({\vx}^{(i)}, {\vx}^{(j)}) ) \\
    &= \sum_{k=1}^{|\mathcal{A}_{hbond}|} \frac{\nabla_{\mathcal{C}_t} P(\alpha_{\min} \leq \max_{i \in \mathcal{A}_{hbond}^{(k)}, j \in \mathcal{G}} \text{hangle}({\vx}^{(i)}, {\vx}^{(j)}) )}{P(\alpha_{\min} \leq \max_{i \in \mathcal{A}_{hbond}^{(k)}, j \in \mathcal{G}} \text{hangle}({\vx}^{(i)}, {\vx}^{(j)}) )}\\
    &=\sum_{k=1}^{|\mathcal{A}_{hbond}|} \xi_3 \nabla_{\mathcal{C}_t} P(\alpha_{\min} \leq \max_{i \in \mathcal{A}_{hbond}^{(k)}, j \in \mathcal{G}} \text{hangle}({\vx}^{(i)}, {\vx}^{(j)}) ) ,
\end{align}
where $\xi_3 = 1/P(\alpha_{\min} \leq \max_{i \in \mathcal{A}_{hbond}^{(k)}, j \in \mathcal{G}} \text{hangle}({\vx}^{(i)}, {\vx}^{(j)}) )$. The final guidance term is:
\begin{equation}
    -\xi_3\nabla_{x_t}  \sum_{k=1}^{|\mathcal{A}_{hbond}|} \max(0, \alpha_{\text{min}} - \phi^{(k)}),
    \label{hbond angle}
\end{equation}
where $\phi^{(k)} = \max_{i \in \mathcal{A}_{hbond}^{(k)}, j \in \mathcal{G}} \text{hangle}({\vx}^{(i)}, {\vx}^{(j)})$. The angle constraint is similar for the $\pi-\pi$ stacking and the final guidance term is:
\begin{equation}
    -\xi_4\nabla_{\mathcal{C}_t}  \sum_{k=1}^{|\mathcal{A}_{\pi}|} \max(0, \phi_\pi^{(k)}- \alpha_{\text{max}}),
    \label{pi angle}
\end{equation}
where $\phi_\pi^{(k)} = \min_{i \in \mathcal{A}_{\pi}^{(k)}, j \in \mathcal{G}} \text{piangle}({\vx}^{(i)}, {\vx}^{(j)})$ and $\text{piangle}(\cdot,\cdot)$ calculates the $\pi-\pi$ stacking angle in Figure. \ref{donor angle}. All the operations and calculations used in geometry guidance are made differentiable and can be plugged into the sampling process of PocketFlow. 

\begin{figure*}[h]
\centering
    \includegraphics[width=0.3\linewidth]{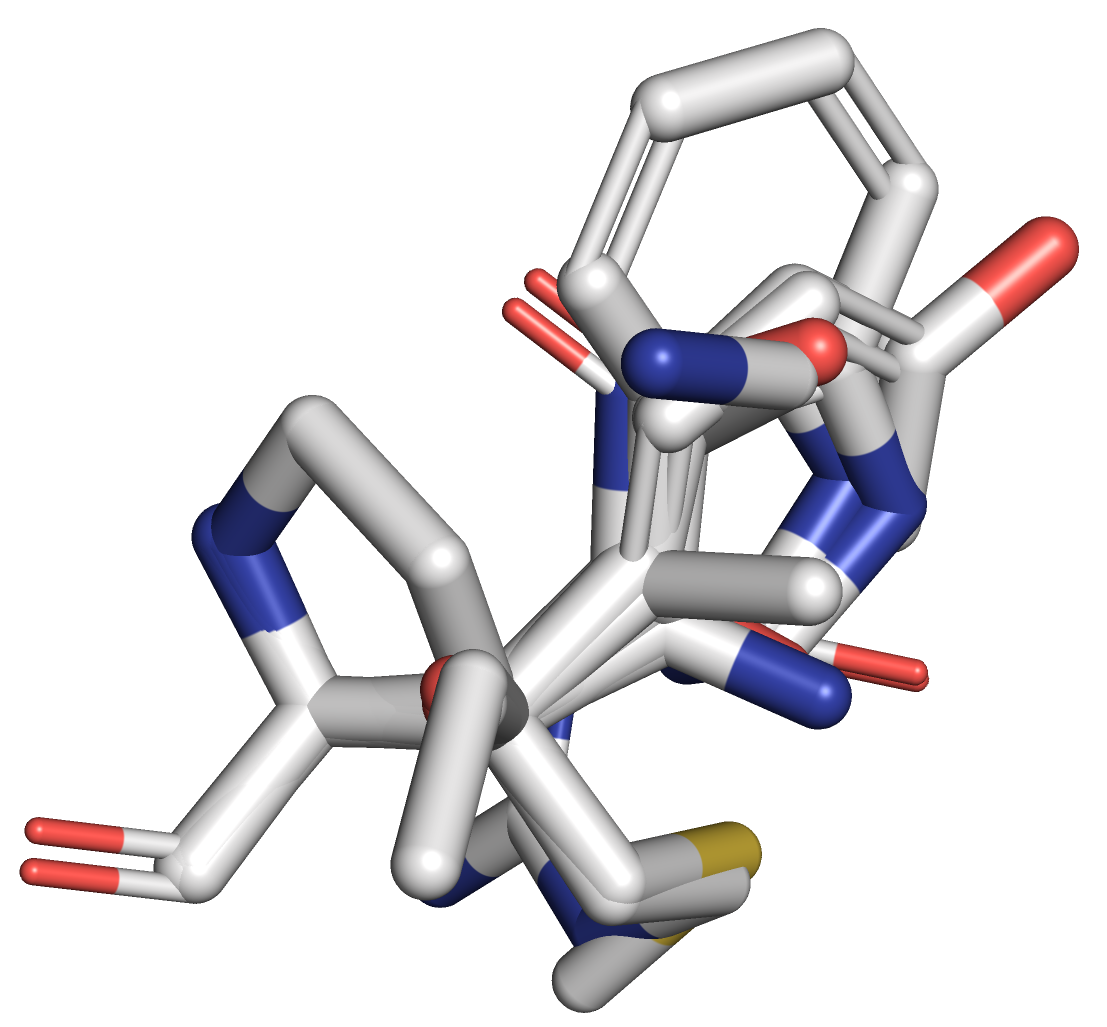}
    \caption{Superposition of 20 residue-type sidechains. When calculating the geometry guidance, we use the expected sidechain conformations with respect to the estimated residue type probability $\hat{\vc}^{(i)}_1$ to avoid the non-differentiability issue of residue type sampling. }
    \label{superposition}
\end{figure*}

\textbf{Sidechain Ensemble.} PocketFlow takes the co-design scheme, where the residue type/side chain structure of the pocket is not determined during sampling. Directly sampling from the residue type distribution makes the model not differentiable \cite{jang2016categorical}. We propose to use the sidechain ensemble for the interaction geometry calculation, i.e., the weighted sum of geometric guidance with respect to residue types. For example, for Equ. \ref{hbond angle}, we have:
\begin{equation}
    -\xi_3\nabla_{\mathcal{C}_t}  \sum_{k=1}^{|\mathcal{A}_{hbond}|} \sum_{n=1}^{20} \hat{\vc}^{(i)}_1 [n] \cdot \max(0, \alpha_{\text{min}} - \phi^{(k)}),
\end{equation}
where $\hat{\vc}^{(i)}_1 [n]$ denote the $n$-th residue type probability and $\phi^{(k)}$ calculates the angle with the $n$-th type residue side chain. 

\section{More Results}
\label{more results}
\begin{figure*}[h]
    \centering
    \includegraphics[width=0.6\linewidth]{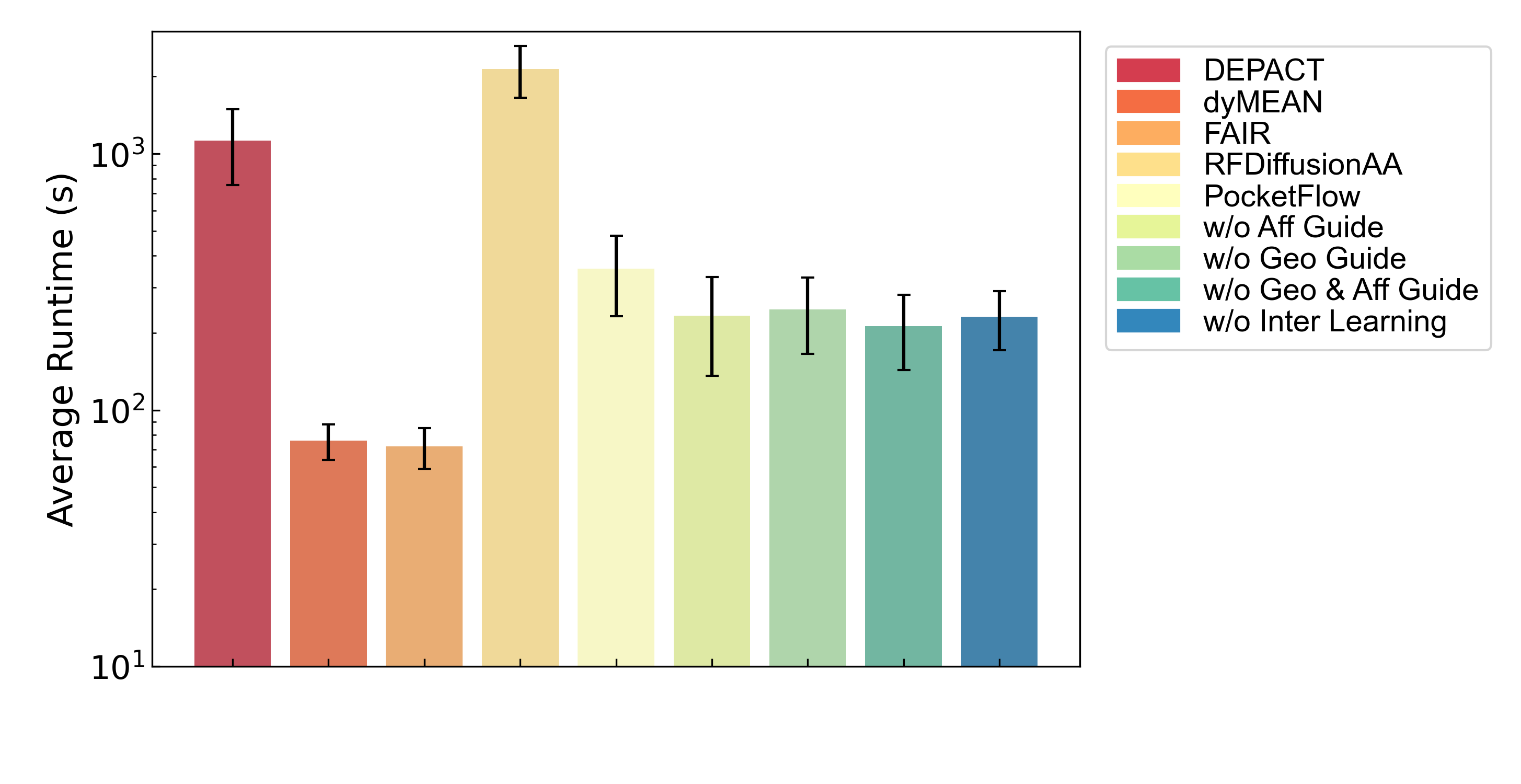}
    \caption{Average Generation time for 100 pockets by different models on CrossDocked (the error bars show the standard deviations over different runs).}
    \label{time}
\end{figure*}
\begin{figure*}[h]
	\centering
	\subfigure[]{\includegraphics[width=0.3\linewidth]{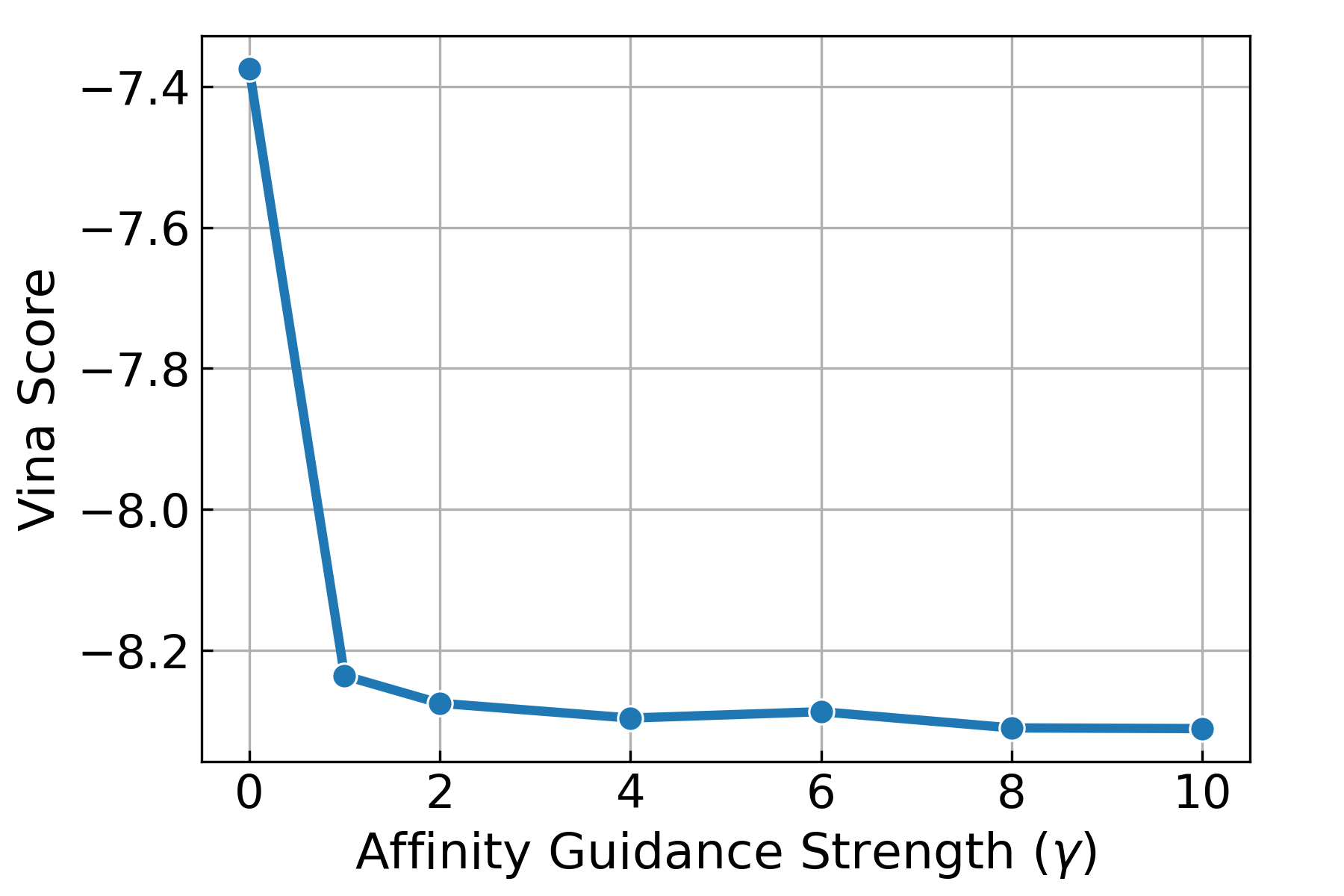}}
    \subfigure[]{\includegraphics[width=0.3\linewidth]{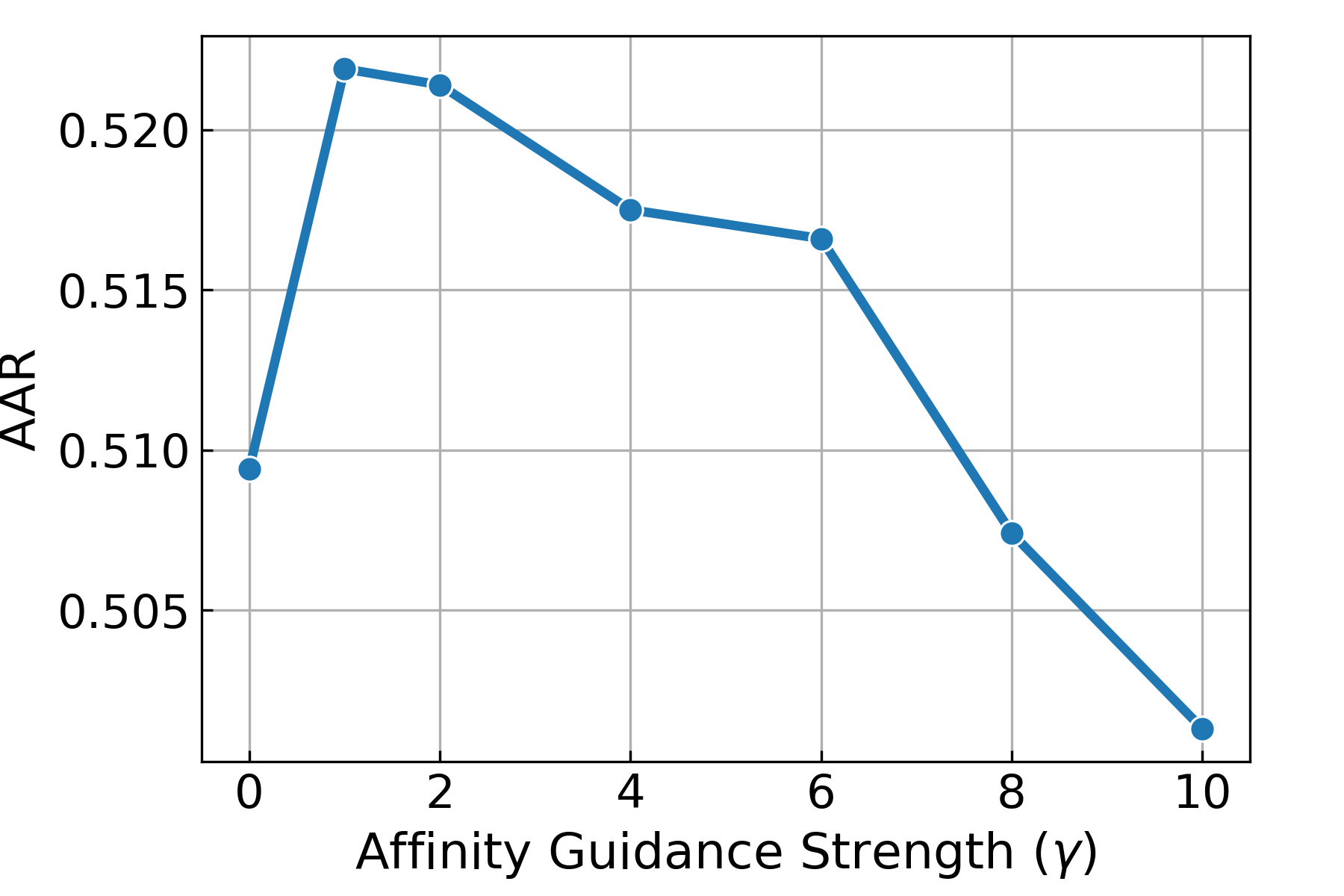}}
    \subfigure[]{\includegraphics[width=0.3\linewidth]{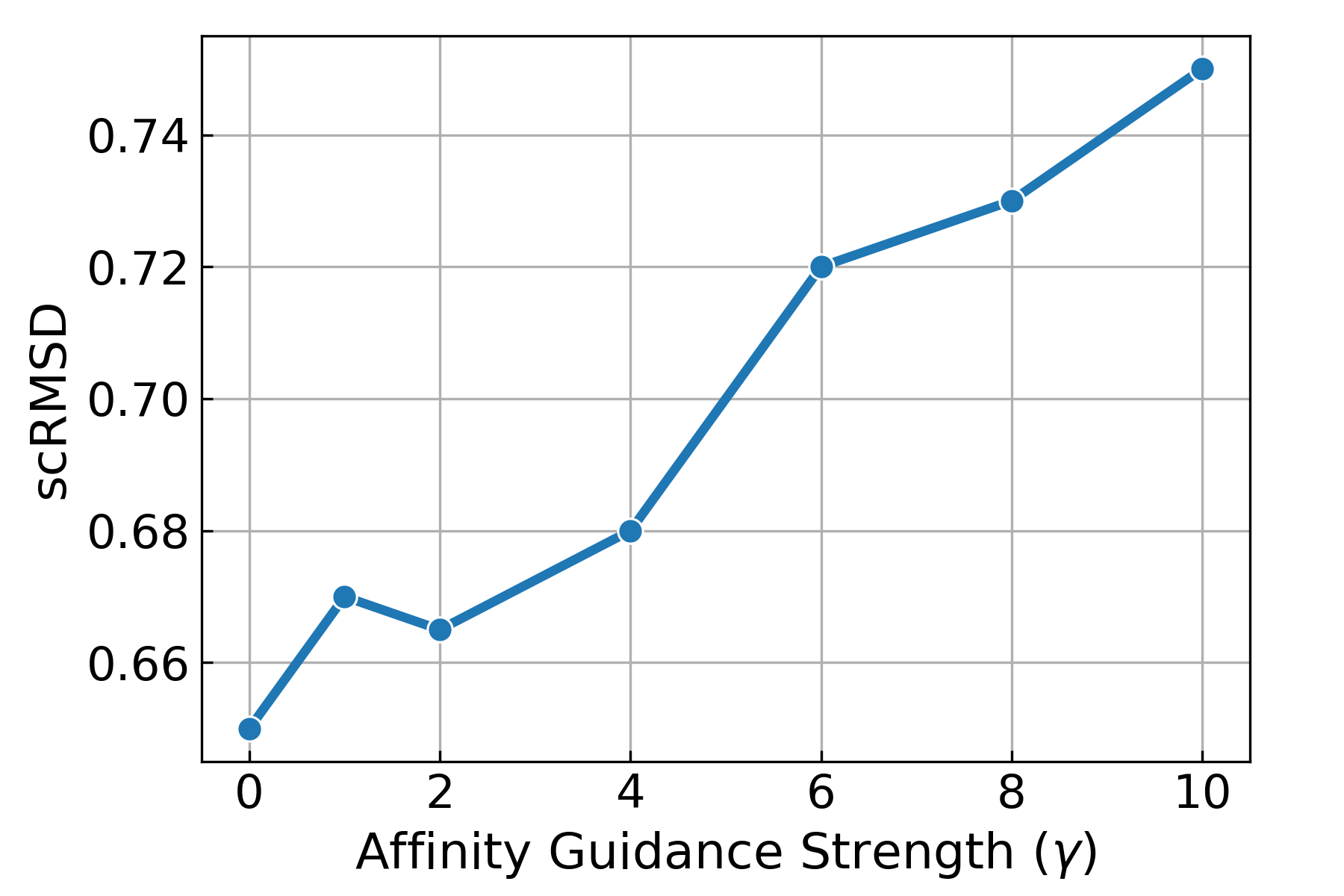}}
	\caption{The influence of Affinity Guidance Strength $\gamma$ on the pocket metrics.}
 \label{gamma}
\end{figure*}

Here, we show additional results on efficiency analysis (Figure. \ref{time}), hyperparameter analysis of $\gamma$ (Figure. \ref{gamma}), and ablation studies on the interaction analysis (Table. \ref{inter ablation}). 

Figure. \ref{time} shows that the PocketFlow is much more efficient than stat-of-the-art diffusion-based models such as RFDiffusionAA. Considering the high quality of the generated pockets, the slight time overhead over models based on iterative refinement (e.g., dyMEAN and FAIR) is acceptable. We find that Affinity and Interaction Geometry Guidance do not add much overhead to the generation process. Therefore, these prior guidance are efficient tools for pocket optimization.

In Figure. \ref{gamma}, we explore the impact of Affinity Guidance Strength ($\gamma$) on various generation metrics. As $\gamma$ is scaled up, the Vina Score significantly improves and quickly stabilizes; AAR initially increases before gradually decreasing; scRMSD, on the other hand, increases with higher $\gamma$. These observations underscore the importance of selecting an appropriate $\gamma$ to effectively balance the guidance and unconditional terms. While Affinity Guidance promotes the generation of high-affinity pockets, an excessively high $\gamma$ can result in less valid pocket sequences or structures. In the default configuration, $\gamma$ is set to 1.

To evaluate the validity of the generated sidechain structure, we compute the Mean Absolute Error (MAE) of sidechain angles (degrees) following \cite{zhang2023diffpack} in Table. \ref{sidechain angles}. We mainly compare PocketFlow with RFDiffusionAA \cite{krishna2023generalized}+LigandMPNN \cite{dauparas2023atomic} on the recovered residues. In the table, we report the average MAE and can observe that PocketFlow achieves better performance in generating valid sidechain structures.
\begin{table}[h!]
    \centering
    \begin{tabular}{lcccc}
        \toprule
        Method & $\chi_1$ & $\chi_2$ & $\chi_3$ & $\chi_4$ \\
        \midrule
        RFDiffusionAA & 21.56 & 27.92 & 48.76 & 52.88 \\
        PocketFlow & \textbf{19.40} & \textbf{26.22} & \textbf{44.57} & \textbf{50.10} \\
        \bottomrule
    \end{tabular}
    \vspace{1em}
    \caption{The MAE of RFDiffusionAA + LigandMPNN and PocketFlow on sidechain torsion angles(degrees).}
    \label{sidechain angles}
\end{table}

In Table. \ref{inter ablation}, we supplement further results of interaction analysis (Table. \ref{tab:inter} in the main paper). We can observe that the guidance terms effectively improve the number of favorable interactions while reducing steric clashes, which lay the foundation for generating high-affinity pockets. 

\begin{table*}[ht]
    \centering
    \scalebox{0.82}{
    \renewcommand{\arraystretch}{1.1}
    \begin{tabular}{c|c|c|c|c|c}
    \hline
    {Methods} & Clash  ($\downarrow$) & HB ($\uparrow$) & Salt ($\uparrow$) & Hydro ($\uparrow$) & $\pi$–$\pi$ ($\uparrow$) \\
    \hline
PocketFlow & \textbf{1.21} & \textbf{4.12} & \textbf{0.27} & \textbf{6.03}&\textbf{0.28}  \\
w/o Aff Guide &\underline{2.58} & 3.84& \underline{0.25}&5.84&{0.27}\\
w/o Geo Guide &3.27& \underline{3.96}& 0.24& \underline{5.90}&0.27 \\
w/o Geo \& Aff Guide &3.56& 3.68& 0.23& 5.73&0.26 \\
w/o Inter Learning& 3.34& 3.74& 0.22& 5.80&0.26 \\
    \hline
    \end{tabular}
    \renewcommand{\arraystretch}{1}
    }
    \caption{Ablation studies on the interaction analysis. The best results are bolded and the runner-up is underlined. 
    }
    \vspace{-1em}
    \label{inter ablation}
\end{table*}

\newpage
\newpage
\section{Baseline Implementation}
\noindent\textbf{DEPACT} \cite{chen2022depact} \footnote{https://github.com/chenyaoxi/DEPACT\_PACMatch} is a template-matching method that follows a two-step strategy for pocket design. Firstly, it searches the protein-ligand complexes in the template database with similar ligand fragments and constructs a cluster model (a set of pocket residues). 
The template databases are constructed based on the corresponding training datasets for fair comparisons.
Secondly, it grafts the cluster model into the protein pocket with PACMatch. It works by placing residues from the cluster model on 
protein scaffolds by matching the atoms of residues with atoms of the protein scaffold. 
The backbone coordinates of the pocket residues are also modified in the process. 
The qualities of the generated pockets are evaluated and ranked based on a statistical scoring function. We take the top 100 designed pockets for evaluation. 
The output of DEPACT+PACMatch is complete protein structures with redesigned pockets. In the paper, we only use DEPACT to represent the whole method of DEPACT+PACMatch for conciseness.  

\noindent\textcolor{black}{\textbf{RFDiffusionAA} \cite{krishna2023generalized} \footnote{https://github.com/baker-laboratory/rf\_diffusion\_all\_atom} is the latest version of RFDiffusion which combines a residue-based representation of amino acids and atomic representations of all other groups to model protein-small molecules/metals/nucleic acids/covalent modification complexes. Starting from random distributions of amino acid residues surrounding target small molecules, RFDiffusionAA can directly generate the small molecule binding protein backbone. Furthermore, with LigandMPNN \cite{dauparas2023atomic}, the latest version of ProteinMPNN\cite{dauparas2022robust}, we can assign residue types and predict sidechain conformations considering the protein-ligand interactions. Experiments in RFDiffusionAA \cite{krishna2023generalized} show that the generated protein by RFDiffusionAA has better binding affinity than those obtained by RFDiffusion with auxiliary potential.} We use the provided checkpoints of RFDiffusionAA for all the experiments since the training code is unavailable. 

\noindent\textbf{dyMEAN} \cite{kong2023end} \footnote{https://github.com/THUNLP-MT/dyMEAN} is an end-to-end full-atom model for E(3)-equivariant antibody design given the epitope and the incomplete sequence of the antibody. Its previous version, MEAN \cite{kong2023conditional}, only considers the backbone atoms, while dyMEAN considers the complete atom structure and performs better on downstream tasks. 
Generally, dyMEAN co-designs antibody sequence and structure via a multi-round progressive full-shot refinement manner, which is more efficient than auto-regressive or diffusion-based approaches. An adaptive multi-channel equivariant encoder is used in dyMEAN, which can process protein residues of variable sizes when considering full atoms.
To adapt dyMEAN to our pocket design task, we replace the antigen with the target ligand molecule to provide the context information for pocket generation. We set the hidden size as 128, the number of layers as 3, and the number of iterations for decoding as 3.

\noindent\textbf{FAIR} \cite{zhang2023full} \footnote{https://github.com/zaixizhang/FAIR} is our previous method for full atom pocket sequence-structure co-design. FAIR operates in two steps, proceeding in a coarse-to-fine manner (backbone refinement to full atoms refinement, including side chains) for full-atom generation. In FAIR, residue types
and atom coordinates are updated using a hierarchical graph transformer composed of a residue-level and atom-level encoder. The number of layers for the atom and residue-level encoder are 6 and 2, respectively. $K_a$ and $K_r$ are set as 24 and 8 respectively. The number of attention heads is set as 4;
The hidden dimension $d$ is set as 128.
\end{document}